\documentclass[journal]{IEEEtran}
\usepackage{subfigure}
\usepackage{cite}
\usepackage{booktabs}
\usepackage{amsmath,amssymb,amsfonts}
\usepackage{amsthm}
\usepackage{algorithmic}
\usepackage{graphicx}
\usepackage{textcomp}
\usepackage{xcolor}
\usepackage{makecell}
\usepackage{multirow}
\usepackage{todonotes}
\usepackage{bm}
\usepackage{tcolorbox}
\usepackage{mathtools}
\usepackage{diagbox}
\usepackage{array}
\usepackage{verbatim}
\usepackage{bm}
\usepackage{upgreek}
\usepackage{tabularx}
\usepackage{cleveref}
\usepackage{graphicx}
\newtheorem{proposition}{Proposition}
\newtheorem{definition}{Definition}

\newtheorem{lemma}{Lemma}
\theoremstyle{remark}
\allowdisplaybreaks
\newtheorem{remark}{Remark}

\def\BibTeX{{\rm B\kern-.05em{\sc i\kern-.025em b}\kern-.08em
    T\kern-.1667em\lower.7ex\hbox{E}\kern-.125emX}}
\usepackage{balance}

\makeatletter
\renewcommand{\citepunct}{,\penalty\@m\hskip.13emplus.1emminus.1em}
\renewcommand{\citedash}{\hbox{--}\penalty\@m}
\makeatother

\begin{document}

\title{Joint Optimization of Uplink and Downlink Resources under QoS Constraints of AR}

\author{Shiyong~Chen,~\IEEEmembership{Student Member,~IEEE,} and Shengqian~Han,~\IEEEmembership{Senior Member,~IEEE}%
\thanks{Shiyong Chen is with the School of Electronics and Information Engineering, Beihang University, Beijing 100191, China (email: shiyongchen@buaa.edu.cn).}
\thanks{Shengqian Han is with the School of Electronics and Information Engineering, Beihang University, Beijing 100191, China (email: sqhan@buaa.edu.cn).}
}

\maketitle

\begin{abstract}
This paper studies joint uplink (UL) and downlink (DL) resource optimization for interactive augmented reality (AR) services, where the live video captured by an AR device is uploaded to the network edge, and then the augmented video is subsequently downloaded. By modeling the AR transmission process as a tandem queuing system, we derive an upper bound for the probabilistic quality of service (QoS) requirement concerning end-to-end latency and reliability. 
The derived bound transforms the probabilistic QoS requirement into a tractable service-time condition that jointly characterizes the UL and DL service processes. Based on this condition, we formulate a weighted UL-DL transmit-power minimization problem and propose a learning-based framework to jointly optimize UL power allocation and DL beamforming. To enable gradient-based training, we further derive a differentiable upper bound for the service-time condition. Moreover, we design GNN-based policies for UL power allocation and DL beamforming, where the UL GNN exploits permutation equivariance (PE) and the DL GNN incorporates both PE and the optimal structure of wideband DL beamforming. Simulation results show that the proposed method satisfies the AR reliability requirement and reduces the weighted transmit power compared with baselines that optimize UL and DL resources separately.
\end{abstract}

\begin{IEEEkeywords}
Augmented reality (AR), end-to-end latency, power allocation, QoS.
\end{IEEEkeywords}

\section{Introduction}
Deploying augmented reality (AR) over wireless networks is a crucial step towards realizing the Metaverse~\cite{metaverse}. AR integrates virtual objects into a live view of the real world, creating a realistic and personalized interactive environment. To achieve a seamless, immersive wireless AR experience, high data rate is required and stringent quality-of-service (QoS) requirements concerning end-to-end (E2E) latency and reliability should be~satisfied.

For AR services, the E2E latency requirement is modeled as the packet delay budget (PDB), where the delay budget defines the maximum allowable delay from the instant a live video frame is generated to the instant the corresponding augmented video frame is returned. Reliability can be modeled by the packet loss rate (PLR), which includes both the probability of packet transmission error and the probability that the E2E delay exceeds the PDB, known as the PDB violation probability~\cite{Standard}. In the mobile edge computing (MEC)-based wireless AR system, resource allocation to ensure latency and reliability requirements was investigated in \cite{Resource_allocation, QoS_Aware, Task_Offloading}. A federated learning approach was proposed in \cite{Resource_allocation} to minimize resource usage, where the PDB was treated as a hard constraint to ensure that the E2E latency of any packet does not exceed the PDB. In~\cite{QoS_Aware, Task_Offloading}, the total E2E delay was minimized under resource constraints. However, due to the fluctuation of wireless channels, using PDB as a hard constraint can result in unbounded resource utilization under poor channel conditions. Moreover, minimizing E2E delay may excessively satisfy the E2E latency requirement, leading to resource waste. By taking PDB as the latency constraint and allowing rare PDB violations, the efficiency of resource utilization can be significantly improved \cite{Probabilistic_constrained}.

\subsection{Related Works}
Given the interactive nature of AR services, which involve uplink (UL) transmission, edge computing, and downlink (DL) transmission, an AR system can be modeled as a tandem queueing system. The PDB violation probability of such a system can be analyzed using stochastic network calculus (SNC), which provides upper bounds by representing complex queueing dynamics as analytically tractable linear models in the min-plus or max-plus algebra framework~\cite{SNC_jiang,Two_side,Time_domain,Time_domain1}. 

Based on SNC, several resource-allocation methods have been developed to satisfy probabilistic delay constraints. In~\cite{QoS_Guaranteed}, a martingale-based SNC model was used to derive stochastic upper bounds on transmission delay, and a greedy resource-block (RB) reallocation algorithm was developed to improve fairness among radio slices with heterogeneous QoS requirements. 
For uplink NOMA, an SNC-based upper bound on the queueing delay violation probability was derived in~\cite{Uplink_NOMA}. Using this bound as a statistical delay QoS constraint, the sum transmit power of a NOMA user pair was minimized through a power-allocation algorithm combining one-dimensional search, gradient descent, and bisection.
In integrated sensing and communication networks~\cite{SNC_Analysis}, the transmission-delay upper bound of sensory data was derived using SNC. Based on this bound, the power allocation coefficient between sensing and communication was optimized, with the objective of minimizing the delay upper bound under sensing-quality constraints. SNC has also been applied to resource allocation in ultra-reliable and low-latency communication (URLLC) systems~\cite{Statistical_QoS,Effective_energy_efficiency,AoI_aware,AoI_Driven,AoI_Driven_Statistical,Performance_evaluation}. For example, in~\cite{Effective_energy_efficiency}, transmission reliability was characterized by both the delay violation probability and the average decoding error probability, and a power-control problem was formulated to maximize effective energy efficiency under reliability and power constraints. The transmit power were iteratively optimized using bisection search and gradient descent, respectively. Although these methods can enforce probabilistic delay constraints through SNC-based upper bounds, they cannot be directly applied to AR systems because they are mainly developed for queueing systems with a single service node.

AR systems are modeled as tandem queueing systems with multiple service nodes~\cite{Roundtrip_Interaction,uVR,Learn_to_Optimize}, making wireless resource optimization under probabilistic delay constraints more challenging. In~\cite{uVR}, an upper bound on the E2E delay violation probability of a VR system was derived by combining SNC with martingale theory, and the theoretical bound was validated by comparing the resulting communication reliability with simulation results. However, no explicit resource-allocation optimization problem was formulated. Beyond delay-bound derivation, power allocation has also been studied for tandem queueing systems under stochastic delay requirements~\cite{Learn_to_Optimize, Martingale_Based}. Nevertheless, these methods typically decompose the E2E probabilistic delay constraint into separate constraints for individual service nodes, and then allocate resources at each node independently. Such a decomposition prevents joint wireless resource allocation across nodes.

The joint optimization of wireless resources under E2E probabilistic delay constraints often leads to high dimensional and non-convex problems~\cite{Learn_to_Optimize, Martingale_Based}, motivating learning-based resource-allocation methods. To reduce the training complexity of such methods, recent studies have exploited mathematical properties of target policies, such as permutation equivariance (PE), for DNN design. For example, various graph neural networks (GNNs) have been designed to leverage different PE properties for power allocation~\cite{Optimal_Wireless_Resource_Allocation, Learning_power, GNNs_for_Scalable_Radio, Learn_to_Optimize} and beamforming~\cite{Understanding_the_Performance, Designing_Heterogeneous_GNNs, Learning_Hybrid_Precoding,A_Model_Based,A_Gradient_Driven, Gradient_Driven_Graph, Gradient_GNN, A_Model_based_DNN}. 
Another approach is to incorporate the structural properties of optimal solutions into DNN design. The optimal solution structure of narrowband multiuser downlink beamforming was characterized in~\cite{Optimal_Structure}, showing that the full beamforming matrix can be recovered from low-dimensional power-related vectors. Based on this structure, model-driven learning methods have been developed for beamforming, where the DNN learns low-dimensional power features rather than the high-dimensional beamforming matrix, and the beamforming matrix is then analytically reconstructed~\cite{Joint_User_Scheduling,Model_Driven,A_Data_and, A_Bipartite_Graph, Transfer_Learning_and}. Although this structure can reduce the training complexity of learn-based beamforming, it cannot be directly applied to AR systems with wideband channels, where the aggregate-rate and QoS constraints couple power allocation across~subchannels.

\subsection{Motivation and Contributions}
Although existing SNC-based methods can analyze the PDB violation probability of AR tandem queueing systems, they usually decompose the E2E probabilistic delay requirement into separate constraints for individual service nodes. Such a decomposition neglects the coupling between UL and DL service processes and prevents joint wireless resource allocation across nodes. To address this limitation, we derive a new PDB violation probability upper bound that characterizes the joint behavior of the UL and DL service processes through a unified service-time condition. Based on this bound, we formulate a joint UL power allocation and DL beamforming problem under probabilistic QoS constraints. To solve the resulting high-dimensional and non-convex problem, we develop an learning method that incorporates both PE property and the optimal structure of the wideband DL beamforming. 
The main contributions of this paper are summarized as~follows.\footnote{A part of this work, specifically the tandem queuing model of the AR interaction process, was reported in a conference paper~\cite{Learn_to_Optimize}. This manuscript substantially extends~\cite{Learn_to_Optimize} by formulating a new weighted UL-DL transmit-power minimization problem, deriving a new PDB violation probability upper bound for joint UL-DL resource allocation, and developing GNN-based UL power allocation and DL beamforming policies that exploit PE properties and the optimal solution structure.}

\begin{itemize}
    \item We derive an upper bound on the PDB violation probability of the AR system using SNC and Doob's inequality. The derived bound yields a unified service-time condition that jointly characterizes the UL and DL service processes, enabling joint UL-DL resource allocation under the unified service-time condition.

    \item We formulate a weighted UL-DL transmit-power minimization problem under the derived service-time condition and develop a learning framework to jointly optimize UL power allocation and DL beamforming. To enable gradient-based training, we further derive a differentiable upper bound for the service-time condition.

    \item We design GNNs for UL power allocation and DL beamforming policies. The UL GNN exploits user PE, while the DL GNN incorporates both PE and the optimal structure of the wideband DL beamforming, reducing the output dimension from the full beamforming matrix to low-dimensional power vectors. Simulation results verify the tightness of the derived PDB bound and show that the proposed method satisfies the AR reliability requirement and reduces the weighted transmit power compared with baselines that optimize UL and DL resources separately.
\end{itemize}

\textit{Notations:} $(\cdot)^{\mathsf T}$, $(\cdot)^{\mathsf H}$, and $(\cdot)^{*}$  denote the transpose, Hermitian transpose and conjugate, respectively. $|\cdot|$ and $|\cdot|_1$ denote Frobenius and $\ell_1$ norm. $\mathbf{I}_z$ is the identity matrix of size $z\times z$, $\succeq$ indicates element-wise inequality, \(\overline{\otimes}\) denotes max-plus
convolution, and \(\lceil x\rceil\) denotes the smallest integer no less
than \(x\).

\section{QoS Requirement and System Model}
Consider a MEC-assisted wireless AR system, where a MEC-enabled base station (BS) equipped with $N_{t}$ antennas serves $K$ single-antenna AR users (AUs). 
Due to the limited computation and power resources of the AR devices, the computation tasks, such as object detection and rendering, are offloaded to the MEC. This requires AUs to upload their live videos to the BS via UL transmission. The MEC then detects target objects in the received videos, generates virtual objects, and superimposes them onto the detected objects. The augmented videos are compressed and delivered back to the AUs via DL transmission. Since medium-quality UL video is sufficient for the detection of target objects, the UL video streams are often downscaled to reduce transmission requirements compared to the higher-quality DL streams \cite{Standard}.

\subsection{QoS Requirement of AR System}
AR services have strong requirements for low E2E latency, high reliability, and high data rates. The transmission between the $\mathrm{AU}_k$ and the BS can be modeled as a tandem queuing system in the time domain, as shown in Fig.~\ref{Tandem queueing model}, where the $\mathrm{AU}_k$ and the BS are service nodes. In the UL, the live video is segmented into frames, and the bits in a frame are represented as a packet. The average inter-arrival time between packets is the inverse of the frame rate $f$. The instantaneous inter-arrival time $\tau_k\left(n\right)$ between the $n$-th packet and the $(n+1)$-th packet is random due to jitter, which follows a truncated Gaussian distribution with mean $\mu$, variance $\sigma^2$ and lies within the interval $\left(b_1, b_2\right)$ according to 3GPP~specifications~\cite{Standard}.

\begin{figure}[htbp]
\centerline{\includegraphics[width=0.45\textwidth]{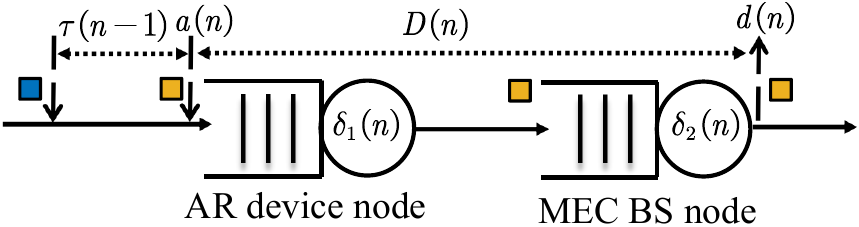}}
\caption{Tandem queueing model of the AR system.} 
\vspace{-0.2cm}
\label{Tandem queueing model}
\end{figure}

Given the randomness in the packet arrival process and the fluctuation of UL and DL wireless channels, packets may accumulate in queues at both the AR devices and the BS. For $\mathrm{AU}_k$, let $a_k\left(n\right)$ denote the arrival time of the $n$-th packet, and $d_k\left(n\right)$ denote the time when the augmented video frame corresponding to the $n$-th packet is transmitted to the $\mathrm{AU}_k$. Then, the E2E latency of the $n$-th packet can be expressed as
\vspace{-0.1cm}
\begin{equation}
D_k\left( n \right) =d_k\left( n \right) -a_k\left( n \right) ,
\label{Definition of delay}
\end{equation}	
where $D_k\left( n\right)$ includes UL transmission delay, UL queuing delay, DL transmission delay, and DL queuing delay. The computing delay at the BS is negligible compared to the PDB, due to the deployment of high-performance hardware and rendering algorithms, and is thus omitted here~\cite{Prediction_Communication_and_Computing}.

For AR services, receiving a packet late is nearly as detrimental as losing it entirely \cite{Standard}. Consequently, reliability is often characterized by the packet loss rate (PLR), which includes both the probability of packet transmission errors and the probability that a packet's E2E latency exceeds the PDB. Owing to the strong error‑correction capability of channel coding and the employment of closed‑loop link‑adaptation techniques, packet transmission errors are effectively controlled. As a result, the PLR is primarily determined by the PDB violation probability~\cite{uVR}. Thus, the QoS related to latency, reliability, and data rates is defined as
\begin{align}
  P\left( D_k\left( n \right) \geq D^k_{\max} \right) \leq \varepsilon^k_{\max}, \label{eq:ARqos}
\end{align}
where $D^k_{\max}$ and $\varepsilon^k_{\max}$ denotes the PDB and the target PLR for $\mathrm{AU}_k$, respectively.

\subsection{Transmission Model}
To ensure high data rate requirements of the AR system, we consider a wideband time division multiplexing transmission system, with \(M_{\mathrm{u}}\) UL subchannels and \(M_{\mathrm{d}}\) DL subchannels. The bandwidth of each subchannel is $W_0$. In the time domain, transmission time is divided into coherence blocks of duration $T_{\mathrm{c}}$. Each coherence block is further divided into an UL interval $T_{\mathrm{u}}$ and an DL interval $T_{\mathrm{d}}$, with $T_{\mathrm{c}}=T_{\mathrm{u}}+T_{\mathrm{d}}$. The large-scale channel gains are assumed to remain constant across coherence blocks while the small-scale channel gains are independent and identically distributed (i.i.d.) across different coherence blocks and remain constant within each block. 

In the UL transmission, each AU transmits its live video stream to the BS. The data rate of $\mathrm{AU}_k$ at the $t$-th coherence block is expressed as 
\begin{equation} \label{UL capacity}
 R_{k,t}^{\mathrm{u}}=\sum\limits_{m=1}^{M_{\mathrm{u}}}{W_0\log _2\big( 1+\gamma_{m,k,t}^{\mathrm{u}}\big)},
\end{equation}
where $\gamma_{m,k,t}^{\mathrm{u}}$ denotes the signal-to-interference-plus-noise ratio (SINR)
of $\mathrm{AU}_k$ on subchannel $m$ during the \( t \)-th block. Using the minimum mean square error (MMSE) receiver at the BS, the UL SINR is given by~\cite{Learning_uplink_Power}
\begin{equation}
\begin{split}
 \!\!\!\!\gamma^{\mathrm{u}}_{m,k,t}\!=\!\!\frac{p_{m,k}\alpha _k\left| \mathbf{w}_{m,k,t}^{\mathsf{H}}\mathbf{h}_{m,k,t} \right|^2}{\!\!  \!\!\sum\limits_{j=1,j\ne k}^K\!\!\!\!\!{p_{m,j}\alpha_j\!\left| \mathbf{w}_{m,k,t}^{\mathsf{H}}\mathbf{h}_{m,j,t}^{} \right|^2}\!\!\!\!+\!\!\left| \mathbf{w}_{m,k,t}^{\mathsf{H}}\mathbf{v}_{m,k,t}^{} \right|\!\sigma_{\mathrm{u}}^2}, \label{UL_SINR}
\end{split}
\end{equation}
where $\alpha_k$ denotes the large-scale channel gain of $\mathrm{AU}_k$, $\mathbf{h}_{m,k,t} \in \mathbb{C}^{N_t \times 1}$ and $\mathbf{w}_{m,k,t} \in \mathbb{C}^{N_t \times 1}$ are the instantaneous channel and MMSE beamforming vectors between $\mathrm{AU}_k$ and the BS over subchannel $m$ at the $t$-th block, respectively, $p_{m,k}$ is the transmit power allocated by $\mathrm{AU}_k$ to subchannel $m$, and $\sigma_{\mathrm{u}}^2$ is the noise power of each UL subchannel.

Similarly, the BS transmits the augmented video streams to AUs in the DL transmission. The DL data rate of $\mathrm{AU}_k$ at the $t$-th coherence block is expressed as 
\begin{equation} \label{DL capacity}
 R_{k,t}^{\mathrm{d}}=\sum\limits_{m=1}^{M_{\mathrm{d}}}{W_0 \log _2\big( 1+\gamma _{m,k,t}^{\mathrm{d}} \big)},
\end{equation}
where $\gamma_{m,k,t}^{\mathrm{d}}$ denotes the DL SINR
of $\mathrm{AU}_k$ on subchannel $m$ during the \( t \)-th block. The DL SINR is given by
\begin{equation} 
 \gamma _{m,k,t}^{\mathrm{d}}=\frac{\alpha _k\left| \mathbf{h}_{m,k,t}^{\mathsf{H}}\mathbf{v}_{m,k,t}^{} \right|^2}{\sum\nolimits_{j=1,j\ne k}^K{\alpha _k\left| \mathbf{h}_{m,k,t}^{\mathsf{H}}\mathbf{v}_{m,j,t}^{\mathsf{H}} \right|^2}+\sigma^2_{\mathrm{d}}}, \label{DL_SINR}
\end{equation}
where $\mathbf{v}_{m,k,t}\in \mathbb{C} ^{N_t \times 1}$ denotes the DL beamforming vector for $\mathrm{AU}_k$ and $\sigma_{\mathrm{d}}^2$ is the noise power of each DL subchannel.

Due to the large amount of data in each packet, transmitting a packet typically requires multiple coherence blocks. The total transmission time, referred to as service time, for the $n$-th packet of $\mathrm{AU}_k$ in the UL and DL is denoted as $\delta_k^{\mathrm{u}}\left( n \right)$ and $\delta_k^{\mathrm{d}}\left( n \right)$, respectively, which can be expressed as
\begin{subequations} \label{Service time}
    \begin{align}
    \!\!\!\delta_k^{\mathrm{u}}\left( n \right)& = \min\!\left\{\! m\!\in\!\mathbb{N}_+\!\!\left| \sum\nolimits_{t=t_{k,n}^{\mathrm{u}}}^{t_{k,n}^{\mathrm{u}}+m-1}\!{R_{k,t}^{\mathrm{u}}\!\cdot\! T_{\mathrm{u}}}\ge L_{\mathrm{u}} \right. \right\}\! \cdot\! T_{\mathrm{c}},
    \label{UL service time}\\
    \!\!\!\delta_k^{\mathrm{d}}\left( n \right)& = \min\! \left\{\! m\!\in\!\mathbb{N}_+\!\!\left| \sum\nolimits_{t=t_{k,n}^{\mathrm{d}}}^{t_{k,n}^{\mathrm{d}}+m-1}\!{R_{k,t}^{\mathrm{d}}\!\cdot\! T_{\mathrm{d}}}\ge L_{\mathrm{d}} \right. \right\}\! \cdot\! T_{\mathrm{c}},\label{DL service time}
    \end{align}
\end{subequations}
where $L_{\mathrm{u}}$ and $L_{\mathrm{d}}$ represent the sizes of the packets transmitted in the UL and DL, and $t_{k,n}^{\mathrm{u}}$ and $t_{k,n}^{\mathrm{d}}$ denote the indices of the blocks when the transmission of the $n$-th packet starts in the UL and DL.

\section{Joint UL-DL Optimization Problem}
To reduce bidirectional transmit power, we jointly optimize the UL power allocation and DL beamforming under the QoS constraints~\eqref{eq:ARqos}. The UL power allocation is determined by the large-scale channel gains, whereas the DL beamforming adapts to both the large-scale and small-scale channel gains in each coherence block. The optimization problem is formulated as
\begin{subequations} \label{P0:Power0}
    \begin{align}
\mathrm{P}0:\min_{\mathbf{P}, \mathbf{V}_{t}} \quad
& (1-\beta) |\mathrm{vec}(\mathbf{P})|_1
+\beta E_{t}\left[|\mathbf{V}_{t}|^2\right]  \label{P0:object0}\\
\mathrm{s.t.}\quad
&\mathbf{P}\succeq 0, \label{P1:UL_resction0}\\
&\sum_{m=1}^{M_{\mathrm{u}}}p_{m,k} \leq P_{\max}^{\mathrm{u}}, \quad k=1,\ldots,K,\label{P1:UL_resction1}\\
&|\mathbf{V}_t|_{\mathrm{F}}^2 \leq P_{\max}^{\mathrm{d}}, \quad \forall t\in \mathbb{N}^+,\label{P1:DL_resction}\\
&\eqref{UL capacity},\eqref{UL_SINR},
\eqref{DL capacity},\eqref{DL_SINR},\eqref{Service time},\eqref{Definition of delay},\eqref{eq:ARqos}, \nonumber
    \end{align}
\end{subequations}
where $\mathbf{P}=[p_{1,1},\ldots,p_{M_{\mathrm{u}},K}]\in\mathbb{R}^{M_{\mathrm{u}}\times K}$ denotes the UL power allocation, $\mathbf{V}_t=[\mathbf{v}_{1,1,t},\ldots,\mathbf{v}_{M_{\mathrm{d}},K,t}]\in\mathbb{C}^{M_{\mathrm{d}}\times N_t\times K}$ is the DL beamforming matrix, $P_{\max}^{\mathrm{u}}$ and $P_{\max}^{\mathrm{d}}$ are the maximum power budgets of each AU and the BS, respectively, and $\beta\in\left(0,1\right)$ denotes the weighting coefficient. In $\mathrm{P}0$, the weighted sum of UL and DL power is averaged over all time steps, referred to as the weighted UL--DL power consumption for short in the sequel.

Directly solving $\mathrm{P}0$ is challenging because the QoS requirement in~\eqref{eq:ARqos} imposes a probabilistic constraint on the E2E delay induced by the tandem queueing dynamics of the arrival and service processes. To address this constraint, the next section derives an upper bound on the PDB violation probability and identifies a sufficient service-time condition that guarantees the QoS requirement.

\subsection{PDB Violation Probability for the AR System}
The AR system is modeled as a tandem queueing system, as shown in Fig.~\ref{Tandem queueing model}. 
Applying the max-plus queueing principle~\cite{Two_side}, the E2E delay of this queueing system, as defined in~\eqref{Definition of delay}, can be rewritten as
\begin{equation}
\begin{aligned}
D(n)
&= a \overline{\otimes} \Delta_{\mathrm{u}} \overline{\otimes} \Delta_{\mathrm{d}}(n)-a(n) \\
&= \max_{0\leq m\leq n}
\left[
\Delta_{\mathrm{u}} \overline{\otimes} \Delta_{\mathrm{d}}(m,n)-\Gamma(m,n)
\right],
\end{aligned}
\label{Queueing principle}
\end{equation}
where the AU index \(k\) is omitted for notational simplicity, $\overline{\otimes}$ denotes max-plus convolution, $\Delta_{\mathrm{u}}(m,n)=\sum_{i=m}^n\delta_{\mathrm{u}}(i)$ and $\Delta_{\mathrm{d}}(m,n)=\sum_{i=m}^n\delta_{\mathrm{d}}(i)$ are the cumulative service processes of the UL and DL nodes, respectively, and $\Gamma(m,n)=\sum_{i=m}^{n-1}\tau(i)$ is the cumulative inter-arrival time. With~\eqref{Queueing principle}, the PDB violation probability is expressed~as
\begin{equation} \label{eq:pdb_prob}
\begin{split}
&P\{D(n)>D_{\max}\} \\
&=
P\Big\{
\max_{0\leq m\leq n}
e^{\theta[
\Delta_1 \overline{\otimes} \Delta_2(m,n)
-\Gamma(m,n)]}
>e^{\theta D_{\max}}
\Big\}  \\
&=P\Big\{ \max_{0\le l\le n} V\left( l \right) >e^{\theta D_{\max}} \Big\},
\end{split}
\end{equation}
where $V(l)=
e^{\theta[
\Delta_1 \overline{\otimes} \Delta_2(n-l,n)
-\Gamma(n-l,n)]}$, and $\theta>0$.

As in~\cite{Learn_to_Optimize}, an upper bound for probabilities of the form in~\eqref{eq:pdb_prob} can be obtained using Doob's inequality in Lemma~\ref{Martingale Inequality}. However, this
requires that \(\{V(l),0\leq l\leq n\}\) be a supermartingale, as defined in Definition~\ref{Supermartingale Process}. 
\begin{lemma}[Doob's Inequality]
\label{Martingale Inequality}
Let $\{X(n),n\geq0\}$ be a supermartingale. For any $x>0$,
\begin{equation}
P\Big\{\max_{0\leq m\leq n} X(m) \geq x \Big\}
\leq
\frac{E[X(0)]}{x}.
\end{equation}
\end{lemma}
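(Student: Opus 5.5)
The plan is the standard stopping-time argument for the maximal inequality. One caveat comes first: the bound as stated needs $X(n)\ge 0$. Without it, $E[X(0)]$ could be negative while the probability on the left is positive. This is harmless for our application, because $V(l)$ in~\eqref{eq:pdb_prob} is an exponential and hence strictly positive. So I will prove the lemma for nonnegative supermartingales with respect to a filtration $\{\mathcal F_m\}$, meaning $X(m)$ is $\mathcal F_m$-measurable, integrable, and satisfies $E[X(m+1)\mid\mathcal F_m]\le X(m)$.

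\emph{Step 1 (stopping time).} Fix $x>0$ and define
\begin{equation*}
T=\min\{m\in\{0,\ldots,n\}: X(m)\ge x\},
\end{equation*}
with the convention $T=n$ if no such $m$ exists. Whether $T\le m$ can be decided from $X(0),\ldots,X(m)$, so $\{T\le m\}\in\mathcal F_m$ and $T$ is a stopping time bounded by $n$. Moreover, the event $A=\{\max_{0\le m\le n}X(m)\ge x\}$ coincides with $\{X(T)\ge x\}$.

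\emph{Step 2 (optional stopping for bounded times).} I will show $E[X(T)]\le E[X(0)]$ directly. Write
\begin{equation*}
X(T)=X(0)+\sum_{m=0}^{n-1}\mathbf 1_{\{T>m\}}\big(X(m+1)-X(m)\big).
\end{equation*}
Since $\{T>m\}\in\mathcal F_m$, conditioning on $\mathcal F_m$ and applying the supermartingale property shows that each summand has nonpositive expectation. Summing over $m$ gives the claim. This is the only step that uses the supermartingale structure, and it is the main point where care is needed: the argument relies on $\{T>m\}$ being $\mathcal F_m$-measurable. The steps after it are elementary.

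\emph{Step 3 (splitting on $A$).} Using the decomposition on $A$ and its complement,
\begin{equation*}
E[X(0)]\ge E[X(T)]=E[X(T)\mathbf 1_A]+E[X(n)\mathbf 1_{A^c}]\ge x\,P(A)+0.
\end{equation*}
The first term is at least $xP(A)$ because $X(T)\ge x$ on $A$. The second term is nonnegative by nonnegativity of $X$. Dividing by $x>0$ yields $P\{\max_{0\le m\le n}X(m)\ge x\}\le E[X(0)]/x$, which is the statement of Lemma~\ref{Martingale Inequality}.

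Applied later with $X(l)=V(l)$ and $x=e^{\theta D_{\max}}$, this bound needs only $E[V(0)]$ once $\{V(l)\}$ is shown to be a supermartingale. That verification is the genuinely model-dependent part, and it belongs to the subsequent analysis rather than to this lemma.
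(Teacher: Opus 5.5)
Your proof is correct, and it is the standard optional-stopping argument. The paper gives no proof of its own and only cites the textbook result (Lemma 5.18 of the Durrett reference), whose standard proof is this same stopping-time argument.

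Your nonnegativity caveat is also well taken. As the paper states it, the lemma is false without $X(n)\ge 0$: for the constant martingale $X\equiv -1$ the left side is $0$ while the right side is $-1/x<0$. The hypothesis does hold in the application, since $V(l)$ is an exponential and hence strictly positive.
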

\begin{IEEEproof}
See Lemma 5.18 in~\cite{Essentials_of_Stochastic_Processes}.
\end{IEEEproof}

\begin{definition}[Supermartingale Process]
\label{Supermartingale Process}
A stochastic process $\{X(n),\mathcal{F}(n),n\geq0\}$ is a supermartingale if $X(n)$ is $\mathcal{F}(n)$-measurable, $E[|X(n)|]<\infty$, and
\begin{equation}
E[X(n+1)\mid \mathcal{F}(n)]\leq X(n).
\end{equation}
\end{definition}

To derive an upper bound for~\eqref{eq:pdb_prob}, we first establish a sufficient condition under which  \(\{V(l),0\leq l\leq n\}\) froms a supermartingale.  
Let \(\mathcal{F}(l)\) denote the filtration, i.e., the information available up to step \(l\). With \(\mathcal{F}(l)\), the expectation of $V(l+1)$ is denoted as
\begin{equation}
\begin{split}\raisetag{0.5cm}
&E\left[V(l+1)\!\mid \!\mathcal{F}(l)\right]\!=\!
E\!\left[\!
e^{\theta[
\Delta_{\mathrm{u}}\overline{\otimes}\Delta_{\mathrm{d}}(n-l-1,n)
-\Gamma(n-l-1,n)]}
\!\!\mid\!\! \mathcal{F}(l)
\!\right] \\
&\underset{(a)}{\leq}
V(l)
E\left[
e^{\theta\{\max[\delta_{\mathrm{u}}(n-l-1),\delta_{\mathrm{d}}(n-l-1)]
-\tau(n-l-1)\}}
\right] \\
&=
V(l)
E\left[
e^{\theta\{\max[\delta_{\mathrm{u}}(n),\delta_{\mathrm{d}}(n)]-\tau(n)\}}
\right] \\
&\underset{(b)}{\leq}
V(l).
\end{split}
\label{Supermartingale of tandem}
\end{equation}
where $(a)$ follows because adding one packet to the tandem service process increases the cumulative service time by no more than the larger service time of the two nodes, and inequality $(b)$ holds if
\begin{equation}
E\Big[
e^{\theta\{\max[\delta_{\mathrm{u}}(n),\delta_{\mathrm{d}}(n)]-\tau(n)\}}
\Big]\leq 1.
\label{eq:condition}
\end{equation}
Therefore, according to Definition~\ref{Supermartingale Process}, $\{V(l),0\leq l\leq n\}$ forms a supermartingale under \eqref{eq:condition}.

Then applying Lemma~\ref{Martingale Inequality} to~\eqref{eq:pdb_prob}, the violation probability is upper bounded by
\begin{equation}
\begin{aligned}
&P\{D(n)>D_{\max}\}
\leq
E[V(0)]e^{-\theta D_{\max}}
 \\
&=E\left[e^{\theta\delta_{\mathrm{u}}(n)}\right]
E\left[e^{\theta\delta_{\mathrm{d}}(n)}\right]
e^{-\theta D_{\max}}
 \\
&\leq
\frac{1}{E\left[e^{-\theta\tau(n)}\right]^2}e^{-\theta D_{\max}}
\triangleq f\left(\theta, D_{\max} \right),
\end{aligned}
\label{eq:Bound of tandem}
\end{equation}
where the last inequality follows from~\eqref{eq:condition}.

With~\eqref{eq:Bound of tandem}, the QoS requirement of the AR system in~\eqref{eq:ARqos} is relaxed to the constraint on the upper bound of PDB violation probability, denoted as
\begin{align}
  f\left(\theta_k,D^k_{\max}\right) \leq \varepsilon^k_{\max},
  \label{eq:qos upper bound}
\end{align}
where the QoS exponent \(\theta_k>0\), and the UL and DL service process satisfies
\begin{equation}
 E\Big[
e^{\theta_k\{\max[\delta^k_{\mathrm{u}}(n),\delta^k_{\mathrm{d}}(n)]
-\tau_k(n)\}}
\Big]\leq 1.
\label{eq:condition for AUk}
\end{equation}

\begin{remark}
The bound in~\eqref{eq:Bound of tandem} is valid when the service process satisfies~\eqref{eq:condition}. This service-time condition constraints the joint behavior of the UL and DL service processes through \(\max\{\delta_{\mathrm{u}}(n),\delta_{\mathrm{d}}(n)\}\), instead of imposing separate constraints to the two service nodes as in~\cite{Learn_to_Optimize,A_combined,Dealing_wit}. This joint constraint enables the subsequent joint optimization of UL power allocation and DL beamforming.
\end{remark}

\subsection{Problem Reformulation}
Although the relaxation in~\eqref{eq:qos upper bound} provides a tractable way to enforce the probabilistic QoS constraint~\eqref{eq:ARqos} in \(\mathrm{P}0\), its tightness is influenced by the QoS exponent $\theta_k$. As the tightness affect the power allocation and beamforming, we first determine the optimal value of $\theta_k$. Noting that increased PDB violation probability is associated with lower transmit power, the optimal $\theta_k^\star$ is achieved when \eqref{eq:qos upper bound} holds with equality, i.e.,
\begin{align}
  f\left(\theta_k, D^k_{\max} \right) = \varepsilon^k_{\max}. \label{eq:optimal theta}
\end{align}
Since \(D^k_{\max}\), \(\varepsilon^k_{\max}\), and the distribution of \(\tau_k(n)\) are given, the optimal value $\theta^\star$ can be readily obtained by a
one-dimensional search, such as bisection.

By setting \(\theta_k=\theta_k^\star\), the probabilistic QoS constraint in~\eqref{eq:ARqos} is replaced by the deterministic service-time condition in~\eqref{eq:condition for AUk}. The original problem \(\mathrm{P}0\) is then reformulated as
\begin{subequations}
\label{P1:Power0}
    \begin{align}
\mathrm{P}1:\min_{\mathbf{P}, \mathbf{V}_{t}} \quad
& (1-\beta) |\mathrm{vec}(\mathbf{P})|_1
+\beta E_{t}\left[|\mathbf{V}_{t}|^2\right]  \label{P1:object0}\\
\mathrm{s.t.}\quad
&\mathbf{P}\succeq 0, \label{P1:UL_resction0}\\
&\sum_{m=1}^{M_{\mathrm{u}}}p_{m,k} \leq P_{\max}^{\mathrm{u}}, \quad k=1,\ldots,K,\label{P1:UL_resction1}\\
&|\mathbf{V}_t|^2 \leq P_{\max}^{\mathrm{d}}, \quad \forall t\in \mathbb{N}^+,\label{P1:DL_resction}\\
&\eqref{UL capacity},\eqref{UL_SINR},
\eqref{DL capacity},\eqref{DL_SINR},\eqref{Service time},
\eqref{eq:condition for AUk},\eqref{eq:optimal theta}.\nonumber
    \end{align}
\end{subequations}
where the QoS exponent $\theta_k$ is determined by~\eqref{eq:optimal theta}. Problem \(\mathrm{P}1\) minimizes the same weighted UL--DL power consumption as
\(\mathrm{P}0\), while replacing the probabilistic QoS constraint~\eqref{eq:ARqos} with the deterministic service-time condition~\eqref{eq:condition for AUk}.

\section{Learning for Joint UL--DL Optimization}
While the probabilistic QoS constraint has been replaced with the deterministic service-time condition in~\eqref{eq:condition for AUk} in \(\mathrm{P}1\), directly solving this problem remains nontrivial. The difficulty stems from the service times $\delta^k_{\mathrm{u}}(n)$ and $\delta^k_{\mathrm{d}}(n)$ in~\eqref{eq:condition for AUk}, which are discrete quantities obtained through enumeration in~\eqref{Service time}. Consequently, \(E\!\big[
e^{\theta_k\max\{\delta^k_{\mathrm{u}}(n),\delta^k_{\mathrm{d}}(n)\}}\big]\) in the deterministic condition is not differentiable with respect to \(\mathbf{P}\) and \(\mathbf{V}_t\), which prevents direct gradient-based optimization.

To address this challenge, we employ an unsupervised primal-dual learning framework to jointly optimize \(\mathbf{P}\) and \(\mathbf{V}_t\), where the the loss function is the Lagrangian function of \(\mathrm{P}1\). To satisfy the QoS constraint, the service-time condition~\eqref{eq:condition for AUk} is incorporated as a penalty term in the loss function. 
To enable gradient-based training, we further derive a differentiable upper bound on \(E\!\big[
e^{\theta_k\max\{\delta^k_{\mathrm{u}}(n),\delta^k_{\mathrm{d}}(n)\}}\big]\). 

\subsection{Unsupervised Learning with the QoS Constraint}
We first reformulate \(\mathrm{P}1\) into a primal-dual form, which allows the primal and dual variables to be jointly updated within an unsupervised primal-dual learning framework~\cite{Unsupervised}. Denote the UL power allocation and DL beamforming policy~as
\begin{equation}\label{eq:UL and DL policy}
\mathbf{P}
=
\upphi_{\mathrm{u}}\left(\boldsymbol{\alpha},
\boldsymbol{\theta}^{\star}\right),
\qquad
\mathbf{V}_t
=
\upphi_{\mathrm{d}}\left(\mathbf{H}_t,
\boldsymbol{\theta}^{\star}\right),
\end{equation}
where
\(\boldsymbol{\theta}^{\star}
=[\theta^{\star}_1,\ldots,\theta^{\star}_K]^{\mathsf T}\in\mathbb{R}^{K\times1}\)
denotes the optimal QoS exponent vector obtained from~\eqref{eq:optimal theta},
\(\boldsymbol{\alpha}=[\alpha_1,\ldots,\alpha_K]^{\mathsf T}\in\mathbb{R}^{K\times1}\) is the vector of large-scale channel gains, and \(\mathbf{H}_t=[\sqrt{\alpha_1}\mathbf{h}_{1,1,t},\ldots,\sqrt{\alpha_K}\mathbf{h}_{M_{\mathrm d},K,t}]
\in\mathbb{C}^{M_{\mathrm d}\times N_t\times K}\) is the composite channel matrix. The primal-dual form of \(\mathrm{P}1\) is formulated as
\begin{subequations} \label{P3:primal-dual problem}
    \begin{align}
    \underset{\boldsymbol{\lambda}}{\max}\,&\underset{\upphi_{\mathrm{u}}\left( \boldsymbol{\alpha}, \boldsymbol{\theta}^{\star} \right), \upphi_{\mathrm{d}}\left( \mathbf{H}_{t}, \boldsymbol{\theta}^{\star} \right)}{\!\!\!\min}\!\!\!\mathcal{L} \left( \upphi_{\mathrm{u}}\left(\boldsymbol{\alpha }, \boldsymbol{\theta}^{\star}\right) ,\upphi_{\mathrm{d}}\left(\mathbf{H}_{t},\boldsymbol{\theta}^{\star}\right),\boldsymbol{\lambda} \right)   \label{P3:object}\\
     \mathrm{s}.\mathrm{t}. \  &\upphi_{\mathrm{u}}\left(\boldsymbol{\alpha}, \boldsymbol{\theta}^{\star}\right)  \succeq 0, \label{P3:power resction1}\\
     &\sum_{m=1}^{M_{\mathrm{u}}}{p_{m,k}} \leq P_{\max}^{\mathrm{u}}, \forall k \in\{1,2,\cdots, K\}, \label{P3:power resction2}\\
     &| \upphi_{\mathrm{d}}\left(\mathbf{H}_{t}, \boldsymbol{\theta}^{\star}\right) |^2 \leq P_{\max}^{\mathrm{d}}, \forall t\in \mathbb{Z} ^+,  \label{P3:precoder resction1}\\
     &\boldsymbol{\lambda} \succeq 0,  \label{P3:lamada resction}
    \end{align}
\end{subequations}
where \(\boldsymbol{\lambda}=[\lambda_1,\ldots,\lambda_K]^{\mathsf T}\in
\mathbb{R}^{K\times1}\) is the Lagrange multiplier vector. The Lagrangian in \eqref{P3:object} is given by
\begin{equation}
\begin{split} \label{Lagrangian}
&\mathcal{L} \left( \upphi_{\mathrm{u}}\left(\boldsymbol{\alpha }, \boldsymbol{\theta}^{\star}\right) ,\upphi_{\mathrm{d}}\left(\mathbf{H}_{t},\boldsymbol{\theta}^{\star}\right),\boldsymbol{\lambda} \right)\\
&=(1-\beta)|\mathrm{vec}(\upphi_{\mathrm{u}}\left(\boldsymbol{\alpha}, \boldsymbol{\theta}^{\star}\right))|_1 + \beta E_{t}\left[ | \upphi_{\mathrm{d}}\left( \mathbf{H}_{t},\boldsymbol{\theta}^{\star} \right) |^2\right] \\ 
&+ \sum\nolimits_{k=1}^K{\lambda _k\Big(E\Big[
e^{\theta_k\{\max[\delta^k_{\mathrm{u}}(n),\delta^k_{\mathrm{d}}(n)]-\tau_k(n)\}}
\Big]-1\Big)}.  
\end{split}  
\raisetag{0.5cm}
\end{equation}

Since \(\upphi_{\mathrm{u}}(\cdot)\) and \(\upphi_{\mathrm{d}}(\cdot)\) are functional
optimization variables, directly optimizing them with conventional numerical
methods is difficult~\cite{Unsupervised, Functional_opti}. Thus, we parameterize them by DNNs~as
\begin{equation}\label{eq:UL and DL policy networks}
    \Phi_{\mathrm{u}}
\left(\boldsymbol{\alpha},\boldsymbol{\theta}^{\star};
\boldsymbol{\omega}_{\mathbf{P}}\right),
\qquad
\Phi_{\mathrm{d}}
\left(\mathbf{H}_{t},\boldsymbol{\theta}^{\star};
\boldsymbol{\omega}_{\mathbf{V}}\right),
\end{equation}
which are referred to as the UL and DL policy networks, respectively, with
\(\boldsymbol{\omega}_{\mathbf{P}}\) and \(\boldsymbol{\omega}_{\mathbf{V}}\)
denoting trainable parameters. Substituting~\eqref{eq:UL and DL policy networks} into ~\eqref{P3:primal-dual problem} results a parameterized primal-dual~problem
\begin{subequations} \label{P4:primal-dual problem}
    \begin{align}
    \underset{\boldsymbol{\lambda}}{\max}&\,\,\underset{\boldsymbol{\omega }_{\mathbf{P}}, \boldsymbol{\omega }_{\mathbf{V}}}{\min}\,\,\mathcal{L} \left( \Phi_{\mathrm{u}}\left( \boldsymbol{\alpha},\boldsymbol{\theta}^{\star};\boldsymbol{\omega }_{\mathbf{P}} \right),\Phi_{\mathrm{d}}\left(\mathbf{H}_{t},\boldsymbol{\theta}^{\star};\boldsymbol{\omega }_{\mathbf{V}} \right),\boldsymbol{\lambda} \right) \,\, \label{P4:object}\\
     \mathrm{s}.\mathrm{t}. \  &\Phi_{\mathrm{u}}\left( \boldsymbol{\alpha },\boldsymbol{\theta}^{\star};\boldsymbol{\omega}_{\mathbf{P}} \right)  \succeq 0, \label{P4:power resction1}\\
     &\sum_{m=1}^{M_{\mathrm{u}}}p_{m,k} \leq P_{\max}^{\mathrm{u}}, \forall k \in\{1,2,\cdots, K\}, \label{P4:power resction2}\\
     &| \Phi_{\mathrm{d}}\left(\mathbf{H}_{t},\boldsymbol{\theta}^{\star}; \boldsymbol{\omega_{\mathbf{V}}} \right) |_{\mathrm{F}}^2 \leq P_{\max}^{\mathrm{d}}, \forall t\in \mathbb{Z} ^+,  \label{P4:precoder resction1}\\
     &\boldsymbol{\lambda} \succeq 0,  \label{P4:lamada resction}
    \end{align}
\end{subequations}
where the Lagrangian function~\eqref{P4:object} is given by
\begin{equation}
\begin{split} \label{eq:Lagrangian DNNs}
&\mathcal{L} \left( \Phi_{\mathrm{u}}\left(\boldsymbol{\alpha}, \boldsymbol{\theta}^{\star};\boldsymbol{\omega }_{\mathbf{P}}\right) ,\Phi_{\mathrm{d}}\left(\mathbf{H}_{t},\boldsymbol{\theta}^{\star};\boldsymbol{\omega }_{\mathbf{V}}\right),\boldsymbol{\lambda} \right))\\
&=\left(1-\beta\right)|\Phi_{\mathrm{u}}\left(\boldsymbol{\alpha}, \boldsymbol{\theta}^{\star};\boldsymbol{\omega }_{\mathbf{P}}\right)|_1 + \beta E_{t}\left[ | \Phi_{\mathrm{d}}\left( \mathbf{H}_{t},\boldsymbol{\theta}^{\star};\boldsymbol{\omega }_{\mathbf{V}} \right) |^2\right] \\ 
&+ \sum\nolimits_{k=1}^K{\lambda _k\Big(E\Big[
e^{\theta_k\{\max[\delta^k_{\mathrm{u}}(n),\delta^k_{\mathrm{d}}(n)]-\tau_k(n)\}}
\Big]-1\Big)}.
\end{split} 
\raisetag{0.5cm}
\end{equation}

To solve problem~\eqref{P4:primal-dual problem}, the primal variables \(\boldsymbol{\omega}_{\mathbf{P}}\) and \(\boldsymbol{\omega}_{\mathbf{V}}\) and the dual variable \(\boldsymbol{\lambda}\) are iteratively updated using stochastic gradient descent and ascent, respectively. For simplicity, we define \( \bar{\mathbf{P}} \triangleq \Phi_{\mathrm{u}}\left( \boldsymbol{\alpha}, \boldsymbol{\theta}^{\star}; \boldsymbol{\omega}_{\mathbf{P}} \right) \) and \( \bar{\mathbf{V}} \triangleq \Phi_{\mathrm{d}}\left( \mathbf{H}_t, \boldsymbol{\theta}^{\star}; \boldsymbol{\omega}_{\mathbf{V}} \right) \). The updates for \( \boldsymbol{\omega}_\mathbf{P} \), \( \boldsymbol{\omega}_\mathbf{V} \) follow the descent direction of the sample-averaged gradient, derived as
\begin{subequations} \label{eq:Gradient of DNNs}
    \begin{align}
    &\boldsymbol{\omega }_{\mathbf{P}}^{(i+1)}\gets \boldsymbol{\omega }_{\mathbf{P}}^{(i)}-\frac{\xi_{\bar{\mathbf{P}}}}{|\mathcal{B} |}\sum_{\boldsymbol{\alpha },\mathbf{H}_t\in \mathcal{B}}{\nabla _{\boldsymbol{\omega }_{\mathbf{P}}}\bar{\mathbf{P}}\cdot \left( \nabla _{\bar{\mathbf{P}}}\mathcal{L} \right) ^T} \\
    &\boldsymbol{\omega }_{\mathbf{V}}^{(i+1)}\gets \boldsymbol{\omega }_{\mathbf{V}}^{(i)}-\frac{\xi_{\bar{\mathbf{V}}}}{|\mathcal{B} |}\sum_{\boldsymbol{\alpha },\mathbf{H}_t\in \mathcal{B}}{\nabla _{\boldsymbol{\omega }_{\mathbf{V}}}\bar{\mathbf{V}}\cdot \left( \nabla _{\bar{\mathbf{V}}}\mathcal{L} \right) ^T}
    \end{align}
\end{subequations}
where $\left(\cdot\right)^T$ is the transpose operation, the superscript $i$ refers to the $i$-th iteration, $\mathcal{B}$ represents a batch of samples for $\boldsymbol{\alpha }$ and $\mathbf{H}_t$, $|\mathcal{B} |$ is the batch size, $\xi _{\bar{\mathbf{P}}}$ and $\xi_{\bar{\mathbf{V}}}$ represent the learning rates of the respective DNNs. The gradients $\nabla _{\boldsymbol{\omega }_{\mathbf{P}}}\bar{\mathbf{P}}$ and $\nabla _{\boldsymbol{\omega }_{\mathbf{V}}}\bar{\mathbf{V}}$ are the transposed Jacobian matrices, which can be computed via backpropagation. To compute the gradient matrices $\nabla _{\bar{\mathbf{P}}}\mathcal{L}
$ and $\nabla_{\bar{\mathbf{V}}}\mathcal{L}$ from the Lagrangian function in \eqref{eq:Lagrangian DNNs}, the following gradients must be obtained
\begin{subequations} \label{eq:gradients of expection}
    \begin{align}    
    &\nabla_{\bar{\mathbf{P}}}E_n\Big[ e^{\theta_k\cdot \max \{\delta^k _{\mathrm{u}}\left( n \right) ,\delta^k _{\mathrm{d}}\left( n \right)\}} \Big],\\   
    &\nabla_{\bar{\mathbf{V}}}E_n\Big[ e^{\theta_k\cdot \max \{\delta^k _{\mathrm{u}}\left( n \right) ,\delta^k _{\mathrm{d}}\left( n \right)\}} \Big].
    \end{align}
\end{subequations}

The dual variable \( \lambda _{k} \) is updated in the direction of gradient ascent as follows
\begin{equation}
\begin{aligned} \label{eq:Gradients of lambda}
\lambda _{k}^{(i+1)}\!\!\gets\!\!\lambda _{k}^{(i)}\!+\!\xi _{\lambda}\Big(E\big[
e^{\theta_k\{\max[\delta^k_{\mathrm{u}}(n),\delta^k_{\mathrm{d}}(n)]-\tau_k(n)\}}
\big]-1\Big),
\end{aligned} 
\end{equation}
where \( \xi_{\lambda} \) denotes the step size for \( \lambda_k \) in the direction of gradient ascent.

While the primal and dual variables are updated by stochastic gradient method, the updates in~\eqref{eq:Gradient of DNNs} and~\eqref{eq:Gradients of lambda} require evaluating
\(E_n\left[e^{\theta_k\max\{\delta^k_{\mathrm{u}}(n),\delta^k_{\mathrm{d}}(n)\}}\right]\)
and its gradients with respect to \(\mathbf{P}\) and \(\mathbf{V}_t\). As the service times \(\delta^k_{\mathrm{u}}(n)\) and \(\delta^k_{\mathrm{d}}(n)\) are discrete quantities obtained by enumeration in~\eqref{Service time}, the expectation is not differentiable with respect to \(\mathbf{P}\) and \(\mathbf{V}_t\), preventing gradients backpropagation. 

To enable gradient-based updates, the next subsection derives a differentiable upper bound on this expectation.

\subsection{Differentiable Service-Time Expectation}
In this subsection, we derive a differentiable upper bound on \(E\big[e^{\theta_k \cdot \max\{\delta^k_{\mathrm{u}}(n),\delta^k_{\mathrm{d}}(n)\}}\big]\) as a closed-form function of \(\bar{\mathbf{P}}\) and \(\bar{\mathbf{V}}\), thereby enabling gradient backpropagation.

Denote the instantaneous UL and DL service times of AU$_k$ in the $t$-th coherence block as
\begin{equation}\label{eq:instantaneous service time}
\bar{\delta}^k_{\mathrm{u}}(t)=T_{\mathrm{c}}\Big\lceil \tfrac{L_{\mathrm{u}}}{T_{\mathrm{u}}R^{\mathrm{u}}_{k,n}(t)}\Big\rceil,
\quad
\bar{\delta}^k_{\mathrm{d}}(t)=T_{\mathrm{c}}\Big\lceil\tfrac{L_{\mathrm{d}}}{T_{\mathrm{d}}R^{\mathrm{d}}_{k,n}(t)}\Big\rceil,
\end{equation}

These quantities represent the time required to deliver an entire packet at the instantaneous rates achieved in coherence block \(t\). By substituting the UL and DL rates in~\eqref{UL capacity} and~\eqref{DL capacity}, respectively in~\eqref{eq:instantaneous service time}, both service times is closed-form functions of \(\mathbf{P}\) and \(\mathbf{V}_t\).

\begin{proposition}\label{proposition:service time upper bound}
When the rates are i.i.d. across coherence blocks, the service time expectation is upper bounded by
\begin{equation}\label{eq:upper bound of service time}
E_n\!\left[e^{\theta_k\max\{\delta^k_{\mathrm{u}}(n),\delta^k_{\mathrm{d}}(n)\}}\right]
\le E_t\!\left[e^{\theta_k\max\{\bar{\delta}^k_{\mathrm{u}}(t),\bar{\delta}^k_{\mathrm{d}}(t)\}}\right],
\end{equation}
where $\theta_k>0$, $\bar{\delta}^k_{\mathrm{u}}(t)$ and $\bar{\delta}^k_{\mathrm{d}}(t)$ denote the instantaneous UL and DL service times defined in~\eqref{eq:instantaneous service time}.
\end{proposition}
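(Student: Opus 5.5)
The plan is to reduce the proposition to a pathwise comparison between the actual (multi-block) service time and the per-block ``instantaneous'' service times, and then compare expectations using the i.i.d.\ assumption.

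\textbf{Step 1: remove the dependence on $n$.} Fix $\mathrm{AU}_k$ and drop $k$. Write $X^{\mathrm u}_t=T_{\mathrm u}R^{\mathrm u}_{k,t}$ and $X^{\mathrm d}_t=T_{\mathrm d}R^{\mathrm d}_{k,t}$, and let $N^{\mathrm u}_t=\lceil L_{\mathrm u}/X^{\mathrm u}_t\rceil$, $N^{\mathrm d}_t=\lceil L_{\mathrm d}/X^{\mathrm d}_t\rceil$, so that $\bar\delta_{\mathrm u}(t)=T_{\mathrm c}N^{\mathrm u}_t$ and $\bar\delta_{\mathrm d}(t)=T_{\mathrm c}N^{\mathrm d}_t$. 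The start indices $t^{\mathrm u}_{k,n}$ and $t^{\mathrm d}_{k,n}$ are determined by the arrivals and by earlier blocks. Since the rates are i.i.d.\ across blocks, the rates from the start block onward have the same joint law as $X_1,X_2,\ldots$, by a strong-Markov / stopping-time argument. Hence $E_n[\cdot]$ can be computed as if both transmissions start at block $1$. I would treat UL and DL as aligned on the same block sequence, with $M_t=\max\{N^{\mathrm u}_t,N^{\mathrm d}_t\}$ i.i.d.\ in $t$. If the UL and DL start blocks differ, the same argument should go through, since each node's blocks are fresh i.i.d.\ samples.

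\textbf{Step 2: pathwise domination.} Because $X^{\mathrm u}_t\ge L_{\mathrm u}/N^{\mathrm u}_t\ge L_{\mathrm u}/M_t$, we get $\sum_{t=1}^{m}X^{\mathrm u}_t\ge L_{\mathrm u}\sum_{t=1}^m 1/M_t$, and similarly for DL. Therefore both packets are delivered once $\sum_{t\le m}1/M_t\ge1$, which gives
\begin{equation*}
\max\{\delta_{\mathrm u}(n),\delta_{\mathrm d}(n)\}\le T_{\mathrm c}\,\tau,\qquad \tau\triangleq\min\Big\{m\in\mathbb N_+ : \sum\nolimits_{t=1}^m \tfrac{1}{M_t}\ge1\Big\}.
\end{equation*}
Note that $\tau\le M_1$ always. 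Since $e^{\theta x}$ is increasing, it then suffices to prove $E[g(\tau)]\le E[g(M_1)]$ with $g(x)=e^{\theta T_{\mathrm c}x}$.

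\textbf{Step 3: the comparison lemma (main obstacle).} I would prove that for i.i.d.\ positive integers $M_t$ and convex increasing $g$ with $g(0)\ge0$ (here $g\ge1$), $E[g(\tau)]\le E[g(M_1)]$. The first attempt is a weighting identity. Set $w_t=1/M_t$ and use the pathwise inequality
\begin{equation*}
g(\tau)\le\sum\nolimits_{t\le\tau}w_t\,g(M_t)\,\tilde c_t,
\end{equation*}
obtained from convexity through $g(\tau)\le g(\tau)\sum_{t\le\tau}w_t$ together with a comparison of $\tau$ with the $M_t$'s in the harmonic-mean sense: on $\{t\le\tau\}$ one has $\sum_{s<t}w_s<1$. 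Taking expectations with Wald's identity, using that $\{t\le\tau\}$ is independent of $M_t$, gives $E[\sum_{t\le\tau}w_tg(M_t)]=\sum_t P(\tau\ge t)E[g(M)/M]$.

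If this accounting does not close cleanly, the fallback is an induction on the residual requirement. Define $\phi(r)=E[g(\tau_r)]$, where $\tau_r$ is the first passage of $\sum_t w_t$ above $r\in(0,1]$. Using the first-step decomposition $\phi(r)=E[g(1+\tau'_{r-w_1})]$, together with $g(1+x)\le$ (convex interpolation) and the bound $\tau_r\le\lceil rM_1\rceil$ on the first block, the aim is $\phi(r)\le E[g(\lceil rM\rceil)]$ by induction over the finitely many blocks. The crux is the convexity step showing that a sum of fractional contributions never costs more in the exponential moment than using a single block's rate for the whole remaining load. This is the step I expect to need the most care.

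Combining Steps 1 to 3 yields $E_n[e^{\theta\max\{\delta_{\mathrm u},\delta_{\mathrm d}\}}]\le E_t[e^{\theta T_{\mathrm c}M_t}]=E_t[e^{\theta\max\{\bar\delta_{\mathrm u}(t),\bar\delta_{\mathrm d}(t)\}}]$, which is the claimed bound.
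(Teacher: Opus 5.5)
\textbf{Verdict: the proposal has a fatal gap, and the statement itself fails in the stated generality.}

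Your Step~2 is a valid pathwise bound for aligned blocks. The comparison lemma you reduce to in Step~3 is false, however, so neither the weighting identity nor the fallback induction can be closed. Take $M_t\in\{2,3\}$ i.i.d.\ with $P(M_t=2)=q\in(0,1)$. We have $\frac{1}{2}+\frac{1}{3}<1$, while any three blocks give $\sum_t 1/M_t\ge 1$. Hence $\tau=2$ only if $M_1=M_2=2$, and $\tau=3$ otherwise, so
\begin{equation*}
E[g(\tau)]-E[g(M_1)]=\big(q^2-q\big)\big(g(2)-g(3)\big)=q(1-q)\big(g(3)-g(2)\big)>0
\end{equation*}
for every strictly increasing $g$, convex or not. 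The realization $M_1=2$, $M_2=3$ gives $\tau=3$. This refutes your remark that $\tau\le M_1$ always (which, if true, would have made Step~3 unnecessary), and also the first-block bound $\tau_r\le\lceil rM_1\rceil$ in your fallback. The first block's rate does not persist. When the load is split over blocks of different rates, the missing $1/6$ of the packet costs a whole extra block. This rounding loss is exactly why the ``convexity step'' you flag goes the wrong way.

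The example lifts to the proposition itself, so what is missing is a hypothesis, not an idea. Let the DL always finish in one block ($T_{\mathrm d}R^{\mathrm d}_{k,t}\ge L_{\mathrm d}$). Let $T_{\mathrm u}R^{\mathrm u}_{k,t}$ be i.i.d., equal to $L_{\mathrm u}/2$ with probability $q$ and to $L_{\mathrm u}/3$ otherwise. Then your Step~2 holds with equality, $\delta^k_{\mathrm u}(n)/T_{\mathrm c}\stackrel{d}{=}\tau$ and $\bar\delta^k_{\mathrm u}(t)/T_{\mathrm c}\stackrel{d}{=}M_1$. With $s=\theta_kT_{\mathrm c}$, the left side of \eqref{eq:upper bound of service time} equals $q^2e^{2s}+(1-q^2)e^{3s}$, which is strictly larger than the right side $qe^{2s}+(1-q)e^{3s}$.

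The paper's Jensen argument breaks on the same example. It treats $\delta_k^{\max}(n)=T_{\mathrm c}L_{\mathrm u}/(T_{\mathrm u}\bar R^{\max}_{k,n})$ as an identity, but only ``$\ge$'' holds. Blocks with $T_{\mathrm u}R^{\mathrm u}_{k,t}=L_{\mathrm u}/2,\,L_{\mathrm u}/3,\,L_{\mathrm u}/2$ give $3T_{\mathrm c}$ versus $2.25T_{\mathrm c}$. That is the wrong direction for the Jensen step. The paper then replaces the average over the random set $\mathcal{T}_k(n)$ by $E_t[\cdot]$, although $l_k^{\max}(n)$ is a stopping time driven by the same rates.

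Your alignment of UL and DL on one block sequence in Step~1 is a second, separate problem. In general the UL and DL of packet $n$ are served in different blocks, so $\delta^k_{\mathrm u}(n)$ and $\delta^k_{\mathrm d}(n)$ are essentially independent. The right side, by contrast, pairs $\bar\delta^k_{\mathrm u}(t)$ and $\bar\delta^k_{\mathrm d}(t)$ from a single block. Take $T_{\mathrm u}R^{\mathrm u}_{k,t}=L_{\mathrm u}/N_t$ and $T_{\mathrm d}R^{\mathrm d}_{k,t}=L_{\mathrm d}/N_t$ for a common $N_t\in\{1,2\}$ with $P(N_t=1)=p$, and disjoint UL/DL blocks. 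Then the left side is $p^2e^{s}+(1-p^2)e^{2s}$ against $pe^{s}+(1-p)e^{2s}$ on the right. This fails even though the single-node comparison holds with equality.

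A correct version needs extra assumptions or a larger right-hand side.
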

\begin{IEEEproof}
See Appendix~\ref{appendix:service time upper bound}.
\end{IEEEproof}

By replacing \(E_n\!\left[e^{\theta_k\max\{\delta^k_{\mathrm{u}}(n),\delta^k_{\mathrm{d}}(n)\}}\right]\)
in~\eqref{eq:Lagrangian DNNs} with its differentiable upper bound
\(E_t\!\left[e^{\theta_k\max\{\bar{\delta}^k_{\mathrm{u}}(t),\bar{\delta}^k_{\mathrm{d}}(t)\}}\right]\),
the required gradients with respect to \(\mathbf{P}\) and \(\mathbf{V}_t\) can be obtained through backpropagation, thus enabling gradient-based training.

\subsection{DNN Ddesign for UL and DL Policies}
In this subsection, we design DNNs for the UL and DL policies. For the UL policy, we first establish its PE property and design a GNN that preserves this property. For the DL policy, we exploit both its PE property and the optimal beamforming structure to design the DL GNN architecture.

\subsubsection{DNN Design for UL Policy} 
The UL policy exhibits a one-dimensional PE (1DPE)~property, as stated in the following proposition.
\begin{proposition}\label{proposition:PE of DL beamforming}
If $\mathbf{P}^{\star}$ is optimal for the input $\left(\boldsymbol{\alpha}, \boldsymbol{\theta}^{\star} \right)$, then $\boldsymbol{\Pi}_{K}^{\mathsf T}\mathbf{P}$ is optimal for the permuted input $\left(\boldsymbol{\Pi}_{K}^{\mathsf T}\boldsymbol{\alpha}, \boldsymbol{\Pi}_{K}^{\mathsf T}\boldsymbol{\theta}^{\star} \right)$, i.e.,
\begin{equation}\label{eq:1DPE of UL}
(\boldsymbol{\Pi}_{K}^{\mathsf T}\mathbf{P})=
\upphi_{\mathrm{d}}\left(\boldsymbol{\Pi}_{K}^{\mathsf T}\boldsymbol{\alpha}, \boldsymbol{\Pi}_{K}^{\mathsf T}\boldsymbol{\theta}^{\star} \right).
\end{equation}
where $\boldsymbol{\Pi}\!\in\!\mathbb{R}^{K\times K}$ is an arbitrary permutation matrix on the user~indices.
\end{proposition}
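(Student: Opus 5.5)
The approach is the standard relabeling argument for problem $\mathrm{P}1$: permuting the user indices maps the feasible set onto itself and leaves the objective unchanged, so optimal solutions are carried to optimal solutions. Since the UL power allocation depends only on the large-scale inputs, the relevant problem is $\mathrm{P}1$ viewed as a problem in $\mathbf{P}$ (with $\mathbf{V}_t$ optimized jointly). I read $\mathbf{P}$ with users along the permuted dimension, so the permutation acts on the user index $k$ of $p_{m,k}$. Fix a permutation $\pi$ with matrix $\boldsymbol{\Pi}$. The permuted input is $\alpha'_k=\alpha_{\pi(k)}$, $\theta'_k=\theta^\star_{\pi(k)}$, and each user carries its own parameters $(D^k_{\max},\varepsilon^k_{\max})$ and arrival process $\tau_k$. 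The candidate solution is $p'_{m,k}=p_{m,\pi(k)}$, with DL beamformers permuted the same way, $\mathbf{v}'_{m,k,t}=\mathbf{v}_{m,\pi(k),t}$.

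The first step is invariance of the objective and of the power constraints. The term $|\mathrm{vec}(\mathbf{P})|_1$ is a sum over all entries, and $|\mathbf{V}_t|^2$ is a sum of squared norms over users, so both are unchanged by a column permutation. Constraint \eqref{P1:UL_resction0} is entrywise, so it is preserved. Constraint \eqref{P1:UL_resction1} is a per-user budget with a common $P^{\mathrm{u}}_{\max}$, so permuting the users only permutes the $K$ inequalities. Constraint \eqref{P1:DL_resction} is a total-power constraint and is therefore invariant.

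The second step is equivariance of the rates, which is the only nontrivial part. The small-scale channels $\mathbf{h}_{m,k,t}$ are i.i.d. across users, so the statistics of the permuted channel tensor coincide with the original ones. We can therefore couple the two systems by relabeling the channels as $\mathbf{h}'_{m,k,t}=\mathbf{h}_{m,\pi(k),t}$. In the UL SINR \eqref{UL_SINR}, the MMSE receiver for user $k$ is determined by the desired-user channel and the set of interferers. The interference sum over $j\neq k$ is symmetric in the interfering users, so $\gamma'^{\mathrm{u}}_{m,k,t}=\gamma^{\mathrm{u}}_{m,\pi(k),t}$. The same holds for the DL SINR \eqref{DL_SINR} under the permuted beamformers. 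By \eqref{UL capacity} and \eqref{DL capacity} the rates permute accordingly. Then \eqref{Service time}, and also the instantaneous service times \eqref{eq:instantaneous service time}, give $\delta'^{k}_{\mathrm{u}}(n)\overset{d}{=}\delta^{\pi(k)}_{\mathrm{u}}(n)$ jointly with the DL service times. Hence the left-hand side of \eqref{eq:condition for AUk} for user $k$ under the permuted input equals the expression for user $\pi(k)$ under the original input, because $\theta'_k=\theta^\star_{\pi(k)}$ and $\tau'_k\overset{d}{=}\tau_{\pi(k)}$. The service-time constraints are therefore again only reordered, and \eqref{eq:optimal theta} is consistent because $\theta^\star$ is permuted along with the QoS parameters.

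The final step is optimality, argued by contradiction. Suppose some feasible $\tilde{\mathbf{P}}$ for the permuted input achieved a strictly smaller objective. Applying the inverse permutation $\pi^{-1}$ and the same argument would give a feasible point for the original input with strictly smaller objective than $\mathbf{P}^\star$, which is impossible. Thus $\boldsymbol{\Pi}^{\mathsf T}\mathbf{P}^\star$ is optimal, which yields \eqref{eq:1DPE of UL}, with the policy understood as the UL map $\upphi_{\mathrm{u}}$.

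The main obstacle is the rate-equivariance step, specifically justifying that the expectation over small-scale fading is unchanged by relabeling. This requires that the $\mathbf{h}_{m,k,t}$ are i.i.d. across users as well as across blocks, so that the dependence on users enters only through $\boldsymbol{\alpha}$ and $\boldsymbol{\theta}^\star$. I would state this assumption explicitly. I would also note that when the optimum is not unique, "optimal" should be read as "an optimal solution".
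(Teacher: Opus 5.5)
Your argument is correct and follows the same route as the paper. The paper's omitted proof simply invokes invariance of the Lagrangian~\eqref{Lagrangian} and the UL power constraints under a relabeling of the users. Working directly with $\mathrm{P}1$ is equivalent, because the service-time constraint terms are only reordered once $\boldsymbol{\lambda}$ is permuted alongside the users. Your write-up is in fact more complete than the paper's. It makes explicit the rate and service-time equivariance step and the assumption it needs, namely that small-scale fading and arrival statistics are identically distributed across users. It also correctly reads the statement's $\upphi_{\mathrm{d}}$ and $\boldsymbol{\Pi}_K^{\mathsf T}\mathbf{P}$ as the UL map acting on the user index.
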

\begin{IEEEproof}
The proof follows from the invariance of the Lagrangian function~\eqref{Lagrangian} and constraints~\eqref{P3:power resction1} and~\eqref{P3:power resction2} under the corresponding index permutations and is omitted for brevity.
\end{IEEEproof}

To exploit the PE property in~\eqref{eq:1DPE of UL}, we implement the UL policy network $\Phi_{\mathrm{u}}\left(\boldsymbol{\alpha},\boldsymbol{\theta}^{\star};\boldsymbol{\omega}_{\mathbf{P}}\right)$ using a GNN defined on the fully connected graph, as shown in Fig.~\ref{GNN_graph}. This graph consists of $K$ user vertices with pairwise edges. For vertex $k$, the input feature is $(\alpha_k,\theta^{\star}_k)$, and the corresponding action is the UL power allocation $p_{k}$. The is no features or actions on the edges.

\begin{figure}[htbp]
\centering
\includegraphics[width=0.25\textwidth]{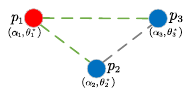}
\caption{Graph for UL policy with $K=3$.}  \label{GNN_graph}
\end{figure}

Denote $\mathbf{u}_{k}^{l}\in\mathbb{C}^{C_l\times1}$ as the hidden representation for user vertex $k$ in the $l$-th layer, with $C_l$ being the feature dimension. GNN iteratively updates it as  
\begin{equation}\label{eq:UL update equation}
\mathbf{u}_{k}^{l+1}=\sigma\Big(\mathbf{S}^l\mathbf{u}_{k}^{l}
\!+\!\mathbf{P}^l\!\!{\sum\limits_{i=1,\,i\neq k}^{M_{\mathrm{d}}}\!\!\mathbf{u}_{i,k}^{l}}\Big), 
\end{equation}
where $\mathbf{S}^{l}, \mathbf{P}^{l}\in\mathbb{C}^{C_{l+1}\times C_l}$ are trainable matrices, and $\sigma(\cdot)$ is the activation function. In an $L$-layer GNN, the input features $\mathbf{u}_{k}^{0}$ is the combination of $\alpha_{k}$ and $\theta_k$, and the output action $\mathbf{u}_{k}^{L}$ corresponds to the $k$-th column of $\mathbf{P}$, denoted as $\mathbf{p}_k\in\mathbb{C}^{M_{\mathrm{u}}\times1}$.

\subsubsection{DNN Design for DL Policy}
The DL policy exhibits a three-dimensional PE (3DPE)~property, as stated in the following proposition.
\begin{proposition}\label{proposition:PE of DL beamforming}
Let $\boldsymbol{\Pi}_{o}=\boldsymbol{\Pi}_{K}^{\mathsf T}
\otimes\boldsymbol{\Pi}_{N_t}^{\mathsf T}
\otimes\boldsymbol{\Pi}_{M_{\mathrm d}}^{\mathsf T}$. If $\mathbf{V}_t$ is optimal for input $(\mathbf{H}_t,\boldsymbol{\theta}^{\star})$, then $\mathrm{vec}(\hat{\mathbf{V}}_t)=\boldsymbol{\Pi}_{o}\mathrm{vec}\!\left(\mathbf{V}_t\right)$ is optimal for the permuted input $\mathrm{vec}(\hat{\mathbf{H}}_{t})=\boldsymbol{\Pi}_{o}\mathrm{vec}\!\left(\mathbf{H}_t\right)$ and $\hat{\boldsymbol{\theta}}^{\star}=\boldsymbol{\Pi}^{\mathsf{T}}_K\boldsymbol{\theta}^{\star}$,~i.e.,
\begin{equation}\label{3D PE}
\hat{\mathbf{V}}_{t}
=
\upphi_{\mathrm{d}}\big(\hat{\mathbf{H}}_{t},\hat{\boldsymbol{\theta}}^{\star}\big),
\end{equation}
where $\boldsymbol{\Pi}_{K}\in\mathbb{R}^{K\times K}$, $\boldsymbol{\Pi}_{N_t}\in\mathbb{R}^{N_t\times N_t}$, and $\boldsymbol{\Pi}_{M_{\mathrm d}}\in\mathbb{R}^{{M_{\mathrm d}}\times {M_{\mathrm d}}}$ are permutation matrices for users, antennas, and DL subchannels, respectively.
\end{proposition}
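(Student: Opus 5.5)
The plan is the standard invariance argument: show that the permutation maps the feasible set of $\mathrm{P}1$ (for input $(\mathbf{H}_t,\boldsymbol{\theta}^{\star})$) bijectively onto the feasible set for the permuted input $(\hat{\mathbf{H}}_t,\hat{\boldsymbol{\theta}}^{\star})$, and that it preserves the objective. Optimality then transfers directly. Since $\boldsymbol{\Pi}_o$ is a Kronecker product of three permutations, it reorders the entries of the tensor $\mathbf{V}_t$ along each mode independently: users by $\boldsymbol{\Pi}_K$, antennas by $\boldsymbol{\Pi}_{N_t}$, and subchannels by $\boldsymbol{\Pi}_{M_{\mathrm d}}$. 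Writing $\pi_K,\pi_N,\pi_M$ for the associated index maps, the permuted channel satisfies $\hat{\mathbf{h}}_{\pi_M(m),\pi_K(k),t}=\boldsymbol{\Pi}_{N_t}^{\mathsf T}\mathbf{h}_{m,k,t}$, and the same relation holds for $\hat{\mathbf{v}}$. I will verify this index bookkeeping first.

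\textbf{Per-subchannel SINR.} A common antenna permutation leaves every inner product unchanged, since $\boldsymbol{\Pi}_{N_t}\boldsymbol{\Pi}_{N_t}^{\mathsf T}=\mathbf{I}_{N_t}$. Hence
$\hat{\mathbf{h}}^{\mathsf H}_{\pi_M(m),\pi_K(k),t}\hat{\mathbf{v}}_{\pi_M(m),\pi_K(j),t}=\mathbf{h}^{\mathsf H}_{m,k,t}\mathbf{v}_{m,j,t}$.
Substituting into \eqref{DL_SINR} gives $\hat\gamma^{\mathrm d}_{\pi_M(m),\pi_K(k),t}=\gamma^{\mathrm d}_{m,k,t}$. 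The interference sum over $j\neq k$ is only re-indexed by $\pi_K$, and $\alpha_k$ is absorbed into $\mathbf{H}_t$.

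\textbf{Rates, service times, QoS condition.} Summing over subchannels in \eqref{DL capacity} is invariant to the order of $m$, so $\hat R^{\mathrm d}_{\pi_K(k),t}=R^{\mathrm d}_{k,t}$. It follows that the DL service times in \eqref{Service time} (and in \eqref{eq:instantaneous service time}) are permuted across users. The UL power $\mathbf{P}$ is fixed here; its own equivariance is the preceding proposition, and the UL quantities are permuted consistently by $\boldsymbol{\Pi}_K$. Condition \eqref{eq:condition for AUk} for $\mathrm{AU}_k$ with exponent $\theta_k^\star$ is therefore exactly the condition for user $\pi_K(k)$ in the permuted problem with $\hat\theta^\star_{\pi_K(k)}=\theta^\star_k$. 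The set of user constraints is unchanged, and so is the Lagrangian penalty term in \eqref{Lagrangian}.

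\textbf{Objective and power constraint.} The Frobenius norm is invariant under any entry permutation, so $|\hat{\mathbf{V}}_t|^2=|\mathbf{V}_t|^2$. The power constraint \eqref{P1:DL_resction} and the term $\beta E_t[|\mathbf{V}_t|^2]$ are thus preserved. This shows the map $\mathbf{V}_t\mapsto\hat{\mathbf{V}}_t$ is a bijection between the two feasible sets that preserves the objective; its inverse is the inverse permutation. Now suppose some $\tilde{\mathbf V}$ beat $\hat{\mathbf V}_t$ for the permuted input. Applying $\boldsymbol{\Pi}_o^{\mathsf T}$ would then give a feasible point strictly better than $\mathbf V_t$ for the original input, a contradiction. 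This yields \eqref{3D PE}.

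The main obstacle is the careful index bookkeeping of the Kronecker ordering relative to $\mathrm{vec}(\cdot)$ of the three-way tensor. One must check that the antenna permutation acts identically on $\mathbf{h}$ and $\mathbf{v}$, so inner products cancel, and that $\boldsymbol{\theta}^\star$ is permuted only by $\boldsymbol{\Pi}_K$. One caveat I would address: the i.i.d. subchannel statistics, and the fact that the per-block constraints decouple across $t$, are needed so that permuting subchannels leaves the expectation $E_t$ unchanged.
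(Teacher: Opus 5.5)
Your proposal is correct and takes the same route as the paper. The paper's proof is a single sentence: it asserts that the Lagrangian of the primal-dual problem and the DL power constraint are invariant under the joint user/antenna/subchannel relabeling, and it omits the details. You supply those details, namely inner-product invariance under a common antenna permutation, order-invariance of the subchannel sum, the consistent user relabeling of rates, service times, $\boldsymbol{\theta}^{\star}$ and the UL quantities, and norm invariance. You work with the objective and feasible set of $\mathrm{P}1$ rather than with the Lagrangian, which is equivalent.
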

\begin{IEEEproof}
The proof follows from the invariance of the Lagrangian function~\eqref{Lagrangian} and constraint~\eqref{P3:precoder resction1} under the corresponding index permutations and is omitted for brevity.
\end{IEEEproof}

A direct approach to exploiting the 3DPE property is to use the multidimensional GNN proposed in~\cite{MDGNN}. However, this architecture operates on multidimensional input and output representations, resulting in a large model and high training complexity.
Prior work on narrowband communication systems has shown that a two-dimensional beamforming matrix can be represented using a one-dimensional power vector~\cite{Optimal_Structure}. Motivated by this result, we derive the optimal structure of the DL beamforming and incorporate it into the GNN architecture to reduce its output dimension.

\begin{proposition}
\label{prop:dl_precoder_structure}
If $\mathbf{V}_t=[\mathbf{v}_{1,1,t},\ldots,\mathbf{v}_{M_{\mathrm{d}},K,t}]\in\mathbb{C}^{M_{\mathrm{d}}\times N_t\times K}$ is the optimal beamforming matrix for input $\mathbf{H}_t$ of the problem~\eqref{P1:Power0}, the optimal beamforming vector $\mathbf{v}_{m,k,t}\in\mathbb{C}^{N_t\times 1}$ can be expressed as
\begin{equation}
\!\!\!\!\mathbf{v}_{m,k,t}\!=\!\sqrt{p_{m,k,t}^{\mathrm d}}\!
\frac{\big(\mathbf{I}_{N_t}\!\!+\!\!\sum\limits_{i=1}^{K}\frac{\eta_{i}\alpha_i}{\sigma_{\mathrm{d}}^2}
\mathbf{h}_{m,i,t}\mathbf{h}_{m,i,t}^{\mathsf H}\big)^{-1}\mathbf{h}_{m,k,t}}
{\big|\big(\mathbf{I}_{N_t}\!\!+\!\!\sum\limits_{i=1}^{K}\frac{\eta_{i}\alpha_i}{\sigma_{\mathrm{d}}^2}
\mathbf{h}_{m,i,t}\mathbf{h}_{m,i,t}^{\mathsf H}
\big)^{-1}
\mathbf{h}_{m,k,t}
\big|
},
\label{eq:optimal beamforming}
\end{equation}
where \(\eta _{i}\ge0\) denotes positive dual variables and \(p_{m,k,t}^{\mathrm d}\ge0\) is the DL transmit power allocated to
\(\mathrm{AU}_k\) on subchannel \(m\), satisfying $\sum_{m=1}^{M_{\mathrm{d}}}\sum_{k=1}^{K}p_{m,k,t}^{\mathrm d}
\leq P_{\max}^{\mathrm d}$.
\end{proposition}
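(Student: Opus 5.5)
We prove the structure by fixing everything in problem~\eqref{P1:Power0} except the DL beamformers in block $t$, and reducing the per-block DL subproblem to a family of narrowband power-minimization problems, one per subchannel. The QoS condition~\eqref{eq:condition for AUk} couples the DL beamformers to the rest of the problem only through the DL rates $R^{\mathrm d}_{k,t}$. This holds because $\delta^k_{\mathrm d}$, or its instantaneous surrogate $\bar\delta^k_{\mathrm d}(t)$ in~\eqref{eq:instantaneous service time}, depends on $\mathbf V_t$ only through these rates. The rates in turn depend on $\mathbf V_t$ only through the per-subchannel SINRs $\gamma^{\mathrm d}_{m,k,t}$.

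\textbf{Step 1: reduction to fixed SINR targets.} Let $\mathbf V_t$ be optimal and set $\gamma^\star_{m,k}=\gamma^{\mathrm d}_{m,k,t}(\mathbf V_t)$. Then $\mathbf V_t$ must also solve the problem of minimizing $\sum_{m,k}|\mathbf v_{m,k,t}|^2$ subject to $\gamma^{\mathrm d}_{m,k,t}\ge\gamma^\star_{m,k}$ for all $m,k$. Any feasible point of this problem with smaller power keeps all rates, and hence~\eqref{eq:condition for AUk}, unchanged. It also keeps the power budget~\eqref{P1:DL_resction} satisfied, so it would strictly reduce the objective of $\mathrm P1$, contradicting optimality. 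The reduced problem decouples across subchannels $m$: for each $m$ we obtain a classical narrowband SINR-constrained power-minimization problem.

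\textbf{Step 2: per-subchannel Lagrangian.} For each $m$, we invoke the known optimal structure of the narrowband problem~\cite{Optimal_Structure}. Equivalently, we rewrite it as a second-order cone program (uplink--downlink duality), for which strong duality holds. Introduce dual variables $\eta_i\ge 0$ for the SINR constraints, scaled by $\sigma^2_{\mathrm d}$ and absorbing $\gamma^\star$. Setting the gradient of the Lagrangian with respect to $\mathbf v_{m,k,t}$ to zero gives
\[
\Big(\mathbf I_{N_t}+\sum_{i}\tfrac{\eta_i\alpha_i}{\sigma_{\mathrm d}^2}\mathbf h_{m,i,t}\mathbf h_{m,i,t}^{\mathsf H}\Big)\mathbf v_{m,k,t}=c_{m,k}\,\mathbf h_{m,k,t}
\]
for some scalar $c_{m,k}$. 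The matrix on the left is positive definite, so it can be inverted. The direction of $\mathbf v_{m,k,t}$ is therefore the normalized MMSE-type vector in~\eqref{eq:optimal beamforming}, and its squared norm is defined as $p^{\mathrm d}_{m,k,t}$. The phase of $c_{m,k}$ does not affect the SINRs and can be absorbed. The power-budget condition $\sum_{m,k}p^{\mathrm d}_{m,k,t}\le P^{\mathrm d}_{\max}$ then follows directly from~\eqref{P1:DL_resction}.

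\textbf{Main obstacle: the indexing of the dual variables.} The stated form uses $\eta_i$ independent of $m$. The per-subchannel duality argument, however, naturally yields subchannel-dependent multipliers $\eta_{m,i}$. To obtain $m$-independent $\eta_i$, one route is to derive the stationarity conditions of the Lagrangian~\eqref{Lagrangian} directly. In that Lagrangian, user $i$'s QoS term depends on all subchannels only through the aggregate rate, so its derivative contributes a common factor $\lambda_i\,\partial(\cdot)/\partial R^{\mathrm d}_{i,t}$ across $m$. The difficulty is that the per-subchannel SINR derivative adds an $m$-dependent factor $1/(1+\gamma_{m,i})$, so additional argument would be needed to show the multipliers are in fact common across $m$. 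If this cannot be established, the fallback is to read $\eta_i$ as per-subchannel multipliers $\eta_{m,i}$, which leaves the structural conclusion unchanged. Rigor also requires noting the non-differentiability of the ceiling in $\bar\delta$, handled by working with the rate-target reformulation of Step 1.
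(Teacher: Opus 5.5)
Your argument is sound and follows essentially the paper's route. Both proofs use an auxiliary DL power-minimization problem, show by contradiction that $\mathbf V_t$ solves it, impose first-order stationarity of its Lagrangian, add the $k$-th rank-one term to both sides, invert the positive-definite matrix, and read off collinearity. The paper fixes the per-user aggregate rates in $\mathrm{P}2$ and then rewrites them as the per-subchannel constraints~\eqref{eq:fixed SINR constraint 1}. Their thresholds $\xi^{\mathrm{th}}_{k,m,t}$ depend on the SINRs of the other subchannels, so differentiating with these thresholds held constant is exactly your step of freezing every $\gamma^{\mathrm d}_{m,k,t}$ at its optimal value. You make this explicit, which exposes the decoupling into $M_{\mathrm d}$ narrowband problems. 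Your appeal to the SOCP/strong-duality structure of each narrowband problem also justifies imposing stationarity on a nonconvex problem, a step the paper leaves unjustified. One wording fix: a cheaper feasible point of your reduced problem does not keep the rates unchanged. It decreases none of them, so no service time grows and~\eqref{eq:condition for AUk} stays satisfied.

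On your ``main obstacle'': the additional argument does not exist in general, and the paper's proof does not supply one. Its Lagrangian~\eqref{eq:lagrangian P2} attaches a single multiplier $\lambda_k$ per user to constraints~\eqref{eq:fixed SINR constraint 1} that are indexed by $(m,k)$, with $m$ left free. A valid Lagrangian needs $\lambda_{m,k}$. Reading it instead as the Lagrangian of the constraint summed over $m$ gives a relaxation whose stationary points need not be those of $\mathrm{P}2$. That is the only place the $m$-independence of $\eta_i$ enters.

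Carry your aggregate-rate computation through, with one multiplier $\mu_i$ per rate constraint (playing the role of your common factor $\lambda_i\,\partial(\cdot)/\partial R^{\mathrm d}_{i,t}$). The weight in~\eqref{eq:optimal beamforming} then comes out as $\eta_{m,i,t}\propto\mu_i\gamma^{\mathrm d}_{m,i,t}/\big((1+\gamma^{\mathrm d}_{m,i,t})I_{m,i,t}\big)$ with an $m$-independent constant. Here $I_{m,i,t}$ is the interference-plus-noise power of user $i$ on subchannel $m$, and this quantity varies with $m$.

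Nor can some other common choice reproduce the optimal beams. For fixed $m$ write $\mathbf g_i=\sqrt{\alpha_i}\mathbf h_{m,i,t}$, and suppose $\big(\mathbf I_{N_t}+\sum_i\nu_i\mathbf g_i\mathbf g_i^{\mathsf H}\big)\mathbf v_{m,k,t}=c\,\mathbf g_k$ with the $\mathbf g_i$ linearly independent (generic for $K\le N_t$). Then the coefficient of $\mathbf g_i$, $i\neq k$, in the expansion of $\mathbf v_{m,k,t}$ equals $-\nu_i\,\mathbf g_i^{\mathsf H}\mathbf v_{m,k,t}$. So $\nu_i$ is uniquely determined by the optimal beams whenever user $k$'s beam leaks to user $i$.

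You should therefore state the result with $\eta_{m,i,t}$; your fallback is the correct form of the proposition. The only casualty is the extra dimension reduction in~\eqref{eq:DL power allocation policy}, where $\eta$ would become an edge output of the DL GNN rather than a user-vertex output.
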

\begin{IEEEproof}
See Appendix~\ref{appendix:dl_precoder_structure}.
\end{IEEEproof}

Proposition~\ref{prop:dl_precoder_structure} shows that the DL beamforming matrix can be recovered from the DL power allocation, and the dual variables through~\eqref{eq:optimal beamforming}. Therefore, instead of directly learning the high-dimensional matrix \(\mathbf{V}_t\), the DL policy network learns the reduced-dimensional policy
\begin{equation}\label{eq:DL power allocation policy}
(\mathbf{P}_t^{\mathrm d},\boldsymbol{\eta}_t)
=
\upphi_{\mathrm d}\left(\mathbf{H}_t,\boldsymbol{\theta}^{\star}\right),
\end{equation}
where $\mathbf{P}_t^{\mathrm d}=[p_{1,1,t}^{\mathrm d},\cdots,p_{M_{\mathrm{d}},K,t}^{\mathrm d}]\in\mathbb{R}^{M_{\mathrm{d}}\times K}$
and \(\boldsymbol{\eta}_t=[\eta_{1,t},\ldots,\eta_{K,t}]^{\mathsf T}
\in\mathbb{R}^{K\times1}\)
denote the optimal DL power allocation matrix and the dual variable vector associated with the input $\left( \mathbf{H}_{t}, \boldsymbol{\theta}^{\star} \right)$. 

The DL policy~\eqref{eq:2D PE of DL} exhibits a PE property along user and subchannel dimensions.  It follows from~\eqref{eq:optimal beamforming} that jointly permuting the user indices of \(\mathbf{H}_t\), \(\mathbf{P}_t^{\mathrm d}\), and \(\boldsymbol{\eta}_t\) induces the corresponding user permutation in \(\mathbf{V}_t\). Similarly, jointly permuting the DL subchannel indices of \(\mathbf{H}_t\) and \(\mathbf{P}_t^{\mathrm d}\) induces the corresponding subchannel permutation in \(\mathbf{V}_t\). As established in Proposition~\ref{proposition:PE of DL beamforming}, the resulting beamforming matrix remains optimal for the correspondingly permuted inputs \(\mathbf{H}_t\) and \(\boldsymbol{\theta}^{\star}\). Consequently, the reduced DL policy exhibits a two-dimensional permutation-equivariance (2DPE) property, expressed as
\begin{equation}\label{eq:2D PE of DL}
\begin{aligned}
&\left(
\boldsymbol{\Pi}^{\mathsf T}_{M_{\mathrm d}}
\mathbf{P}_t^{\mathrm d}\boldsymbol{\Pi}_{K},
\boldsymbol{\Pi}^{\mathsf T}_{K}\boldsymbol{\eta}_t
\right)\\
&\quad=
\upphi_{\mathrm d}\left(
\left(
\boldsymbol{\Pi}_{K}^{\mathsf T}
\otimes\mathbf{I}_{N_t}
\otimes\boldsymbol{\Pi}_{M_{\mathrm d}}^{\mathsf T}
\right)
\mathrm{vec}(\mathbf{H}_t),
\boldsymbol{\Pi}^{\mathsf T}_{K}\boldsymbol{\theta}^{\star}
\right).
\end{aligned}
\end{equation}

To exploit the 2DPE property, we implement the DL policy using a GNN defined on the bipartite graph shown in Fig.~\ref{GNN_graph}. The graph contains two types of vertices, namely DL subchannel vertices and user vertices, connected with edges. Features and actions are associated with the edges and user vertices. Specifically, edge \((m,k)\) takes the composite channel vector \(\sqrt{\alpha_k}\mathbf{h}_{m,k,t}\) as its feature and the corresponding DL power allocation \(p_{m,k,t}^{\mathrm d}\) as its action. User vertex \(k\) takes \(\theta_k^{\star}\) as its feature and \(\eta_{k,t}\) as its action. The subchannel vertices have neither features nor actions.

\begin{figure}[htbp]
\centering
\includegraphics[width=0.35\textwidth]{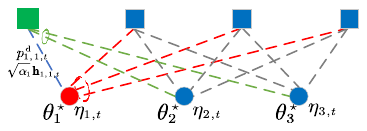}
\caption{Graph for DL policy with $M_{\mathrm{d}}=4$ and $K=3$.}  \label{GNN_graph}
\end{figure}

 Let $\mathbf{d}_{m,k}^{l}\in\mathbb{C}^{C_l\times1}$ and $\mathbf{a}_{k}^{l}\in\mathbb{C}^{C_l\times1}$ denote the hidden representations of edge $(m,k)$ and user vertex $k$ at layer $l$, respectively. These representations are updated as
\begin{subequations}\label{update_equation_of_GNN}
    \begin{align}
\mathbf{d}_{m,k}^{l+1}
&=\sigma\Big(\mathbf{S}^l\mathbf{d}_{m,k}^{l}
\!+\!\mathbf{P}_1^l\!\!\!\!{\sum\limits_{i=1,\,i\neq m}^{M_{\mathrm{d}}}\!\!\mathbf{d}_{,i,k}^{l}}+\mathbf{P}_2^l\!\!\!\!{\sum\limits_{j=1,\,j\neq k}^{K}\!\!\mathbf{d}_{n,j}^{l}}\Big), \label{edge update equation}\\
\mathbf{a}_{k}^{l+1}&=\sigma\Big(\mathbf{Q}_{1}^{l}\mathbf{a}_{k}^{l}
+\mathbf{Q}_{2}^{l}\sum\nolimits_{i=1}^{M_{\mathrm{d}}}\mathbf{d}_{i,k}^{l}\Big), \label{vertex update equation}
    \end{align}
\end{subequations}
where $\mathbf{S}^{l}, \mathbf{P}_{m}^{l}, \mathbf{Q}_m^{l}\in\mathbb{C}^{C_{l+1}\times C_l}, m\in\{1,2\}$ are trainable matrices. In an \(L\)-layer GNN, the edge and user representations are initialized using
\(\mathbf{d}_{m,k}^{0}=\sqrt{\alpha_k}\mathbf{h}_{m,k,t}\) and
\(\mathbf{a}_{k}^{0}=\theta_k^{\star}\), respectively. The final edge representation \(\mathbf{d}_{m,k}^{L}\) produces \(p_{m,k,t}^{\mathrm d}\), whereas \(\eta_{k,t}\) is obtained by averaging the elements of \(\mathbf{a}_{k}^{L}\).

\section{Simulation Results}
In this section, we evaluate the performance of the proposed method by comparing with relative~baselines.

\subsection{Simulation Setup}
Unless otherwise specified, all simulations are conducted using the following setup. Let \(N_0\) denote the single-sided noise power spectral density, and let \(N_{\mathrm F}^{\mathrm u}\) and \(N_{\mathrm F}^{\mathrm d}\) denote the receiver noise figures for the UL and DL, respectively. The noise power over each subchannel is $\sigma_x^2=W_0\,10^{(N_0-30)/10}10^{N_{\mathrm F}^{x}/10}$ with $x\in\{\mathrm u,\mathrm d\}$. The normalized UL SNR of \(\mathrm{AU}_k\) is defined as $\mathrm{SNR}_k
=
10\log_{10}\big(\frac{\alpha_k}{\sigma_{\mathrm u}^2}\big)\,\mathrm{dB}$. To characterize heterogeneous large-scale channel conditions, we take \(\mathrm{AU}_1\) as the reference user and express $\mathrm{SNR}_k=\mathrm{SNR}_1+\Delta\mathrm{SNR}_{k,1}$ where \(\Delta\mathrm{SNR}_{k,1}\) denotes the SNR difference between \(\mathrm{AU}_k\) and \(\mathrm{AU}_1\). The large-scale channel gain of \(\mathrm{AU}_k\) is obtained as $\alpha_k=\sigma_{\mathrm u}^2 10^{\mathrm{SNR}_k/10}$, and the small-scale channel gains follow Rayleigh fading. The remaining simulation parameters and hyperparameters are listed in Table~\ref{parameter}.

\begin{table} [htbp]
	\centering
	\caption{Simulation Parameters and Hyper-parameters}
	\begin{tabular}{c|c}
		\hline
        \text{Number of AUs} $K$  & $2$ \\
		\hline
		\text{Target PLR} $\varepsilon_{\max}^1, \varepsilon_{\max}^2$  & $10^{-2}, 10^{-3}$ \\
		\hline
		\text{PDB} $D_{\max}^1, D_{\max}^2$  &  20, 20 ms \\
		\hline
		\text{Duration of coherence block} $T_c$ & 1 ms \\
		\hline
		\text{Duration of transmission} $T_{\mathrm{u}}, T_{\mathrm{d}}$ & 0.5, 0.5 ms \\
		\hline
		\text{Frame size} $L_{\mathrm{u}}$, $L_{\mathrm{d}}$ & $2$, $100$ Kb  \\
		\hline
		\text{Frame rate} $f$ & 120 fps  \\
		\hline
		\multirow{2}{*}{Truncated Gaussian distributed}  & $\left(\mu, \sigma, b_1, b_2\right)$\\
		arrival process & $(\frac{1}{f}, 2, \frac{1}{f}-5, \frac{1}{f}+5)$\\
		\hline
		\text{Maximum transmit power} $P_{\max}^{\mathrm{u}},P_{\max}^{\mathrm{d}}$  & 23, 46 dBm \\
		\hline
        \text{Weighting coefficient} $\beta$ &$ 5\times 10^{-3}$ \\
		\hline
		\text{Number of antennas} $N_{t}$  & 8 \\
		\hline
		\text{Number of subchannels} $M_{\mathrm{u}}$, $M_{\mathrm{d}}$ & 11, 24 (5MHz, 10MHz BW)\\
		\hline
		\text{Bandwidth of subchannels} $W_0$  & 360 kHz \\
		\hline
		\text{Single-sided noise spectral density} $N_{0}$  & -174 dBm/Hz \\
		\hline
		\text{Noise figure} $N_{\mathrm{F}}^{\mathrm{u}}, N_{\mathrm{F}}^{\mathrm{d}}$  & 5, 3 dB \\
        \hline
        \text{SNR for} $\mathrm{AU}_1$ $\mathrm{SNR}_{1}$  & 0 dB \\
		\hline
	\end{tabular}
	\label{parameter}
\end{table}

Both the UL and DL GNNs consist of six hidden layers with widths \(\{64,128,512,512,128,64\}\), and a LeakyReLU activation function is applied after each hidden layer. The two networks are jointly trained in an unsupervised manner using the loss function in~\eqref{eq:Lagrangian DNNs}. The backpropagation gradients are computed according to~\eqref{eq:Gradient of DNNs} and~\eqref{eq:gradients of expection}. Since the ceiling operation in~\eqref{eq:instantaneous service time} is non-differentiable, we adopt a straight-through estimator by setting its backward gradient to one, following~\cite{Multi_Domain_Correlation}. We generate \(500{,}000\) samples for training and an independent set of \(50{,}000\) samples for testing. The two GNNs are trained using the Adam optimizer with initial learning rates \(\xi_{\bar{\mathbf{P}}}=\xi_{\bar{\mathbf{V}}}=10^{-3}\) and batch size \(|\mathcal{B}|=64\).

\subsection{Performance Evaluation}
\subsubsection{Theoretical PDB Violation Probability}
We first evaluate the tightness of the derived upper bound of PDB violation probability by comparing it with existing upper bounds derived in~\cite{Learn_to_Optimize,A_combined,Dealing_wit}, where the inter-arrival times of data packets follow a truncated Gaussian distribution, as specified in Table~\ref{parameter} and the service times for the two nodes follow an exponential distribution with an average service time of 4 slots. Fig.~\ref{tightness} shows that the derived upper bound closely approximates the simulated results in the region $D_{\max}\in[15,25]$, demonstrating the tightness of the proposed~bound.

\begin{figure}[htbp]
\centering
\includegraphics[width=0.5\textwidth]{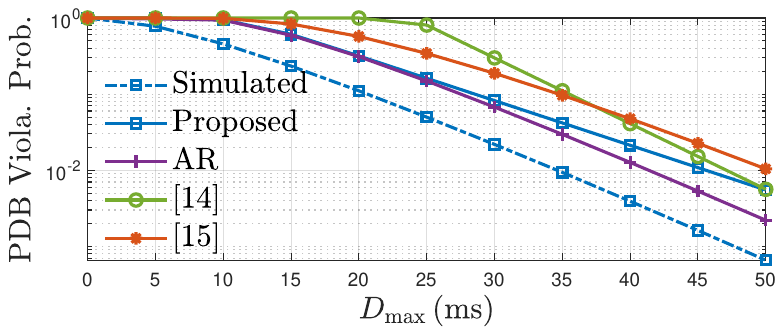}
\caption{Tightness of bounds for PDB violation probability. }  \label{tightness}
\end{figure}

\subsubsection{Performance Comparison with Baselines}
To evaluate the performance gain of the proposed method, we compare it with the following baselines:
\begin{itemize}
	\item[$\bullet$] \textbf{Time-Const.:} Following the assumption in~\cite{MTG1}, each packet has equal and constant UL and DL service times, determined by~\eqref{eq:qos upper bound}. The UL transmit power of each AU is obtained by a bisection search such each packer can be delivered within the prescribed constant service time. For the DL, We further assume that the bits of each packet are uniformly transmitted over the available slots. In each slot  the minimum mean-square	error (WMMSE) algorithm is used to optimized the beamforming vectors for transmit-power minimization~\cite{Optimal_Structure}.
		
	\item[$\bullet$] \textbf{UL-DL:} This method from~\cite{Learn_to_Optimize}, where the QoS exponent \(\theta_k^\star\) is obtained by using the theoretical PDB upper bound derived in~\cite{Learn_to_Optimize}. Given the obtained \(\theta_k^\star\), the original joint optimization problem is decomposed into separate UL and DL subproblems. The corresponding UL and DL policy networks are then trained independently in an unsupervised manner.
	
	\item[$\bullet$] \textbf{Upp-Bou.:} This method follows the same procedure with UL-DL, except it computes the QoS exponent \(\theta_k^\star\) of \(\mathrm{AU}_k\) using the theoretical upper bound derived in~\cite{Time_domain}.

    \item[$\bullet$] \textbf{Two-Queue:} Following~\cite{Two_error}, this method decomposes the tandem queue into two single-server queues corresponding to the UL and DL, respectively. Their QoS requirements are denoted by \((D_{\mathrm{u}}^{k},\varepsilon_{\mathrm{u}}^{k})\) and \((D_{\mathrm{d}}^{k},\varepsilon_{\mathrm{d}}^{k})\), where the end-to-end delay budget is equally divided as \(D_{\mathrm{u}}^{k}=D_{\mathrm{d}}^{k}=D_{\max}^k/2\). The violation-probability targets are also set equal, i.e., \(\varepsilon_{\mathrm{u}}^{k}=\varepsilon_{\mathrm{d}}^{k}\), and are chosen to satisfy $1-\left(1-\varepsilon_{\mathrm{u}}^{k}\right)\left(1-\varepsilon_{\mathrm{d}}^{k}\right)=\varepsilon_{\max}^k$. The UL and DL policy networks are subsequently trained independently to satisfy the QoS requirement of each queue.
\end{itemize}

Fig.~\ref{Power gain SNR} shows the power saving gains under different SNRs. The performance metric is defined as the percentage reduction in weighted UL–DL power consumption achieved by different methods relative to the Two-Queue baseline. As Two-Queue is the reference and its power saving gain is zero by definition, it is not
shown as a separate curve in this figure. The results show that all the methods constantly achieves positive power saving gains relative to Two-Queue, because Two-Queu decomposes the tandem system into two single-server queues neglecting the statistical coupling between the UL and DL queues. The proposed method achieves the greatest power saving gains, because it jointly optimizes the UL and DL resource-allocation policies, thereby allowing their service-time distributions to adapt to the packet-arrival statistics. Moreover, UL-DL outperforms Upp-Bou. because it employs a tighter PDB violation probability bound than Upp-Bou.

\begin{figure}[htbp]
	\centering
	\includegraphics[width=.9\linewidth]{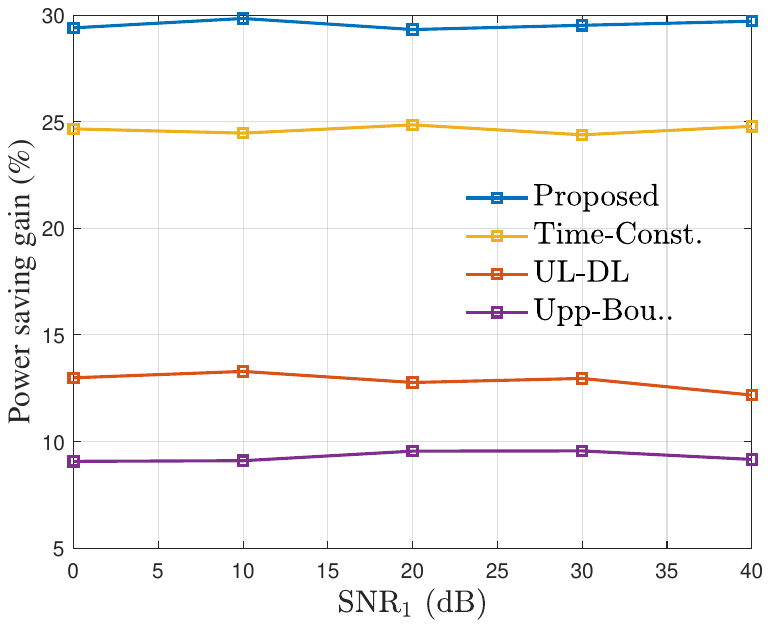}
	\caption{Power saving gains under different SNRs with $\mathrm{SNR}^{\varDelta}_{2,1}=5\,$dB.}
	\label{Power gain SNR}
\end{figure}

Fig.~\ref{Power gain beta} compares the power saving gains under different
weighting coefficients \(\beta\). The proposed method consistently achieves the greatest power saving gains because it jointly adapts the UL and DL transmit power according to the values of $\beta$. Its advantage becomes more pronounced as \(\beta\) increases, because the achievable power saving gains in DL (i.e. $\beta=1$) is much larger than that in UL (i.e. $\beta=0$) and  a larger \(\beta\) assigns greater importance to the DL. Among the baselines, UL-DL outperforms Upp-Bou. because it employs a tighter PDB violation probability bound and therefore yields a less conservative resource allocation. At \(\beta=0\), Time-Const. is the least effective methods because it adopts a conservative UL policy that ensures that every packet is delivered within the prescribed constant service time. At
\(\beta=1\), however, Time-Const. outperforms UL-DL and Upp-Bou., primarily
because the WMMSE algorithm employed in its DL policy can efficiently obtain
a high-quality locally optimal~solution.
\begin{figure}[htbp]  
	\centering
	\includegraphics[width=.9\linewidth]{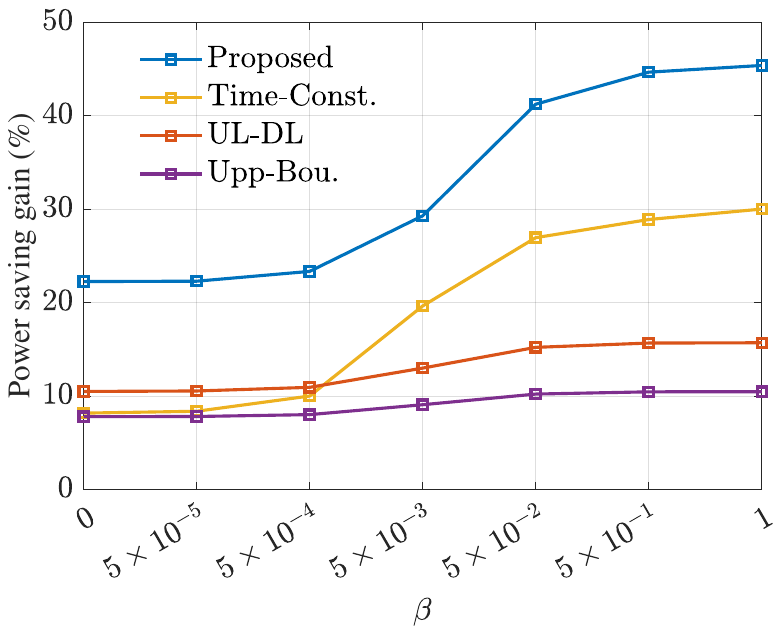}
	\caption{Power saving gains under different $\beta$ with $\mathrm{SNR}^{\varDelta}_{2,1}=5\,$dB.}
	\label{Power gain beta}
\end{figure}

Fig.~\ref{Total Service time comparison} shows the average total service time,
defined as the sum of the average UL and DL service times. The proposed method
achieves a higher average total service time than UL-DL, Upp-Bou., and
Two-Queue. This result partly explains its higher power-saving gain, because a
longer service time generally permits transmission at a lower rate and power.
Although, Time-Const. achieves a comparable or even longer total service time
than the proposed method, it yields a smaller power-saving gain because its
fixed service times prevent the UL and DL service processes from adapting to
the channel and packet-arrival statistics.

\begin{figure}[htbp]   
	\centering
	\includegraphics[width=.9\linewidth]{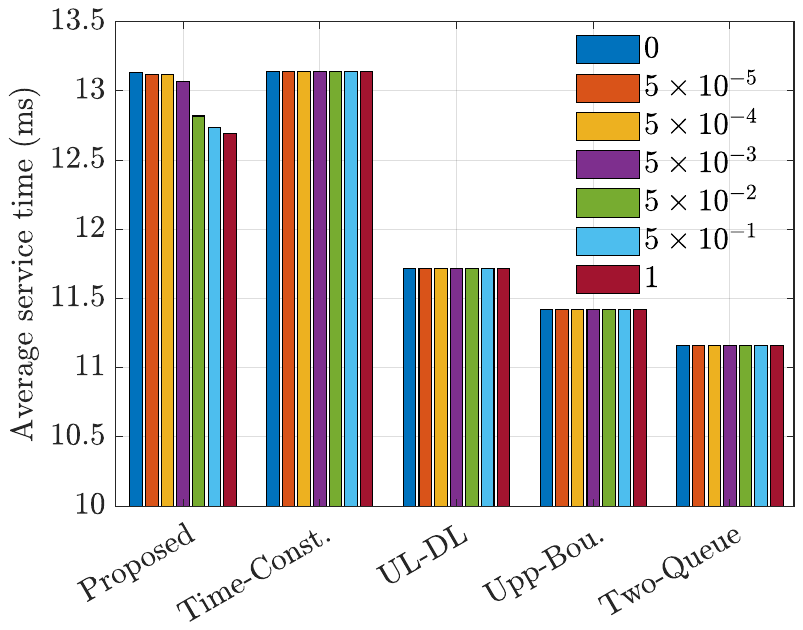}
	\caption{Average total service time under different $\beta$ with  $\mathrm{SNR}^{\varDelta}_{2,1}=5\,$dB.}
	\label{Total Service time comparison}
\end{figure}

Fig.~\ref{Service time comparison} presents the average UL and DL service
times for different values of \(\beta\). As shown in
Figs.~\ref{UL service time} and~\ref{DL service time}, respectively, the
proposed method decreases the average UL service time and increases the
average DL service time as \(\beta\) increases. This adjustment increases the
UL transmit power while reducing the DL transmit power. In contrast,
the service times of the baselines vary little with \(\beta\), limiting their
ability to adapt the UL--DL power tradeoff.

\begin{figure}[htbp]
	\centering
	\begin{minipage}[t]{0.9\linewidth}	
		\subfigure[Service time of UL.]{
			\includegraphics[width=\textwidth]{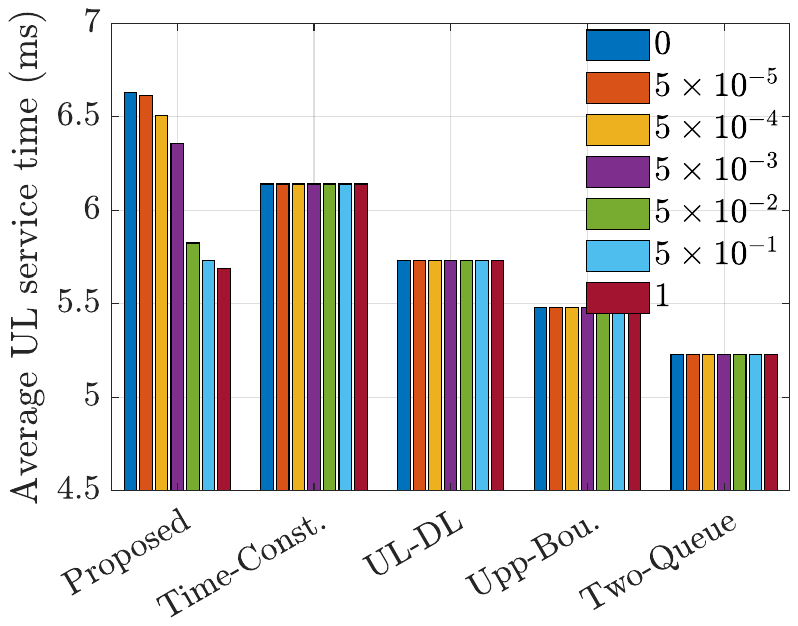}\label{UL service time}}
	\end{minipage}
	\begin{minipage}[t]{0.9\linewidth}	
		\subfigure[Service time of DL.]{
			\includegraphics[width=\textwidth]{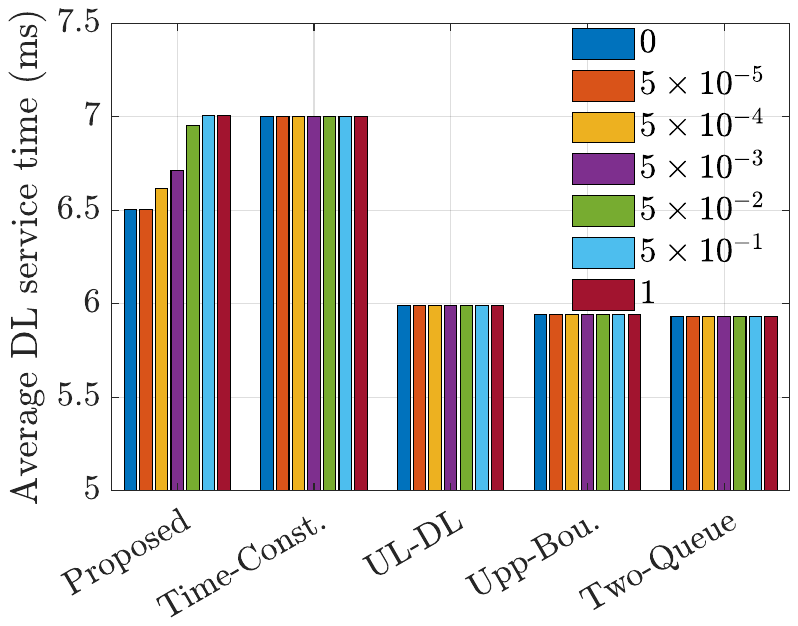}\label{DL service time}}
	\end{minipage}
	\caption{Service time comparison under different $\beta$ with $\mathrm{SNR}^{\varDelta}_{2,1}=5\,$dB.}
	\label{Service time comparison}
\end{figure}

\begin{table*}[htbp]  
	\centering
	\caption{Evaluating the Reliability of the Proposed Method}
	\label{Achieved PDB violation probability}
	\begin{tabular}{c|c|c|c|c|c}
		\hline
		\hline
		\text{Weight coefficient $\beta$}                                & $5\times 10^{-5}$   & $5\times 10^{-4}$    & $5\times 10^{-3}$    & $5\times 10^{-2}$    & $5\times 10^{-1}$ \\
		\hline
		\text{Achieved PDB violation probability of \( \mathrm{AU}_1 \)} & $2.60\times 10^{-3}$ & $2.60\times 10^{-3}$ & $1.57\times 10^{-3}$ & $1.30\times 10^{-3}$ & $1.57\times 10^{-3}$ \\
		\hline
		\text{Achieved PDB violation probability of \( \mathrm{AU}_2 \)} & $2.70\times 10^{-4}$ & $2.70\times 10^{-4}$ & $4.10\times 10^{-4}$ & $1.40\times 10^{-4}$ & $2.20\times 10^{-4}$ \\
		\hline
		\hline
	\end{tabular}
	\vspace{-0.5cm}
\end{table*}

Table~\ref{Achieved PDB violation probability} reports the empirical PDB
violation probabilities achieved by the proposed method. For each AU, the probability
is estimated from \(10^8\) Monte Carlo trials. In each trial, the E2E packet delays are recorded and the PDB violation probability is estimated as the fraction of packets whose delays exceed the corresponding
PDB threshold. All reported probabilities remain below their prescribed
thresholds, confirming that the proposed method satisfies the reliability requirements under the considered simulation settings.

\section{Conclusion}
This paper investigated joint UL and DL resource optimization for MEC-assisted wireless AR systems under PDB violation probability constraints. We modeled the AR transmission process as a tandem queueing system and derived an upper bound on the PDB violation probability using SNC and Doob's inequality. Based on this bound, the original probabilistic QoS constraint was transformed into a tractable service-time condition that captures the joint behavior of the UL and DL service processes. We then formulated a weighted UL-DL transmit-power minimization problem and designed a learning framework to jointly optimize UL power allocation and DL beamforming. To reduce training complexity, we further developed GNN-based policies for UL power allocation and DL beamforming by exploiting PE properties and the optimal solution structure.
Simulation results verified the tightness of the derived PDB bound and showed that the proposed method satisfies the AR reliability requirement while reducing the weighted transmit power compared with baselines that optimize UL and DL resources separately.

\appendices
\numberwithin{equation}{section}  
\section{Proof of Proposition~\ref{proposition:service time upper bound}} 
\label{appendix:service time upper bound}

We prove this proposition by Jensen's inequality. We first introduce the definition of Jensen's inequality.

\begin{definition}[Jensen's inequality]\label{Jensen's Inequality}
Let $X$ be a discrete random variable taking positive values
$x_1,\ldots,x_N$ with probabilities $p_1,\ldots,p_N$, where
$\sum_{i=1}^{N}p_i=1$. For any convex function $g(\cdot)$, we have
\begin{equation}\label{eq:jensen inequality}
     g\big(\sum_{i=1}^{N}p_i x_i\big)
    \leq \sum_{i=1}^{N}p_i g(x_i).
\end{equation}
\end{definition}

Denote
$\delta_k^{\max}(n)=\max\{\delta^k_{\mathrm{u}}(n),
\delta^k_{\mathrm{d}}(n)\}$. Without loss of generality, assume that the
UL determines the service time of the $n$-th packet, i.e.,
$\delta_k^{\max}(n)=\delta^k_{\mathrm{u}}(n)$ (the DL-dominant case is symmetric). Let $\mathcal{T}_k(n)$ denote the set of coherence blocks used to transmit the $n$-th packet, and $l_k^{\max}(n)=\delta_k^{\max}(n)/T_{\mathrm{c}}$ be the number of such blocks. The
average UL rate over these blocks is given by
\begin{equation}\label{eq:everage UL rate}
    \bar{R}_{k,n}^{\max}
    =\frac{1}{l_k^{\max}(n)}
    \sum_{t\in\mathcal{T}_k(n)} R^{\mathrm{u}}_{k,t}.
\end{equation}
Using~\eqref{eq:everage UL rate}, $\delta_k^{\max}(n)$ can be expressed in
terms of the average rate as
\begin{equation}\label{eq:service time with average service time}
    \delta_k^{\max}(n)
    =\frac{T_{\mathrm{c}}L_{\mathrm{u}}}
    {T_{\mathrm{u}}\bar{R}_{k,n}^{\max}}.
\end{equation}

Defining $g_k(x)\triangleq\exp\!\big(\theta_k\frac{T_{\mathrm{c}}L_{\mathrm{u}}}{T_{\mathrm{u}}x}\big)$, since $1/x$ is convex over $x>0$ and the exponential function is convex and increasing, $g_k(x)$ is convex. Applying Jensen's inequality with uniform
weights $1/l_k^{\max}(n)$ gives
\begin{equation}\label{Jensen bound}
\begin{aligned}
e^{\theta_k\delta_k^{\max}(n)}
&\leq
g_k\!\big(
\frac{1}{l_k^{\max}(n)}
\sum_{t\in\mathcal{T}_k(n)} R^{\mathrm{u}}_{k,t}
\big)  \\
&=
\frac{1}{l_k^{\max}(n)}
\sum_{t\in\mathcal{T}_k(n)}
e^{\theta_k\frac{T_{\mathrm{c}}L_{\mathrm{u}}}{T_{\mathrm{u}}R^{\mathrm{u}}_{k,n}}}\\
&\leq
\frac{1}{l_k^{\max}(n)}
\sum_{t\in\mathcal{T}_k(n)}
e^{\theta_k\bar{\delta}^k_{\mathrm{u}}(t)}.
\end{aligned}
\end{equation}
where the first inequality holds as the Jensen's inequality and the last inequality holds as the ceiling operation in the instantaneous service time, as defined
in~\eqref{eq:instantaneous service time}.

Since $e^{\theta_k x}$ is monotonically increasing in $x$, it follows that
\begin{equation}\label{eq:add one instance service time}
e^{\theta_k\bar{\delta}^k_{\mathrm{u}}(t)}
\leq
e^{\theta_k\max\{
\bar{\delta}^k_{\mathrm{u}}(t),
\bar{\delta}^k_{\mathrm{d}}(t)\}} .
\end{equation}
Substituting~\eqref{eq:add one instance service time} into~\eqref{Jensen bound} yields the per-packet
upper bound
\begin{equation}\label{per-packet bound}
\!\!\!e^{\theta_k\max\{\delta^k_{\mathrm{u}}(n),
\delta^k_{\mathrm{d}}(n)\}}
\!\leq\!
\frac{1}{l_k^{\max}(n)}\!\!\!\sum_{t\in\mathcal{T}_k(n)}
\!\!\!\!e^{\theta_k\max\{
\bar{\delta}^k_{\mathrm{u}}(t),
\bar{\delta}^k_{\mathrm{d}}(t)\}} .
\end{equation}
As the rates across coherence blocks are i.i.d., taking expectations on both sides of~\eqref{per-packet bound} yields
\begin{equation}\label{eq:upper bound of service time in appendix}
E_n\left[
e^{\theta_k\max\{\delta^k_{\mathrm{u}}(n),
\delta^k_{\mathrm{d}}(n)\}}
\right]
\leq
E_t\left[
e^{\theta_k\max\{
\bar{\delta}^k_{\mathrm{u}}(t),
\bar{\delta}^k_{\mathrm{d}}(t)\}}
\right], 
\end{equation}
which complete the proof.

\section{Proof of Proposition~\ref{prop:dl_precoder_structure}}
\label{appendix:dl_precoder_structure}
We relate the optimal DL beamforming solution of the reformulated problem
\(\mathrm{P}1\) to an auxiliary transmit-power minimization problem with
minimum-rate constraints. Let \(\mathbf{V}_t^{\star}\) denote the optimal DL
beamforming matrix in coherence block \(t\), and let
\(R_{k,t}^{\mathrm d,\star}\) denote the corresponding DL rate of
\(\mathrm{AU}_k\). The auxiliary problem is formulated as
\begin{subequations}
\label{P2}
\begin{align}
\mathrm{P}2:
\min_{\{\mathbf{v}_{m,k,t}\}}
&\sum_{m=1}^{M_{\mathrm{d}}}\sum_{k=1}^{K}|\mathbf{v}_{m,k,t}|^2
\label{P2:obj}\\
\mathrm{s.t.}&\sum_{m=1}^{M_{\mathrm{d}}} W_0\log_2\left(1+\gamma_{m,k,t}^{\mathrm d}\right)
\geq R_{k,t}^{\mathrm d,\star}, \,\, \forall k .
\label{P2:rate}
\end{align}
\end{subequations}
We first show by contradiction that \(\mathbf{V}_t^{\star}\) is also optimal
for \(\mathrm{P}2\). Suppose otherwise. Then, there exists another beamforming
matrix \(\mathbf{V}_t'\) satisfying~\eqref{P2:rate} such that
\(|\mathbf{V}_t'|^2
<|\mathbf{V}_t^{\star}|^2\).
Because \(\mathbf{V}_t'\) achieves DL rates no smaller than
\(R_{k,t}^{\mathrm d,\star}\), it does not increase any DL service time and
thus preserves QoS feasibility. Replacing \(\mathbf{V}_t^{\star}\) with
\(\mathbf{V}_t'\) would strictly reduce the DL power term in the objective of
\(\mathrm{P}1\), contradicting the optimality of~\(\mathbf{V}_t^{\star}\).

We next characterize the optimal beamforming structure using the first-order
stationarity conditions of the auxiliary problem. Substituting~\eqref{DL_SINR}
into~\eqref{P2:rate} gives
\begin{equation}
\!\!1\!+\!\frac{\alpha_k\left|\mathbf{h}_{m,k,t}^{\mathsf H}\mathbf{v}_{m,k,t}\right|^2}{\sum\limits_{j=1,j\ne k}^{K}\!\!\!\!
\alpha_k\left|\mathbf{h}_{m,k,t}^{\mathsf H}\mathbf{v}_{m,j,t}\right|^2
 +\sigma_{\mathrm{d}}^2}\geq
\frac{\gamma^{\mathrm{th}}_k}{\prod\nolimits_{i=1}^{M_{\mathrm{d}}}(1+\gamma^{\mathrm{d}}_k)},
\label{eq:fixed SINR constraint}
\end{equation}
where $\gamma^{\mathrm{th}}_k=e^{R_{k,t}^{\mathrm d,\star}/W_0}$ denotes the rate induced threshold. Define $\xi^{\mathrm{th}}_{k,m,t}=\frac{\gamma^{\mathrm{th}}_k}{\prod\nolimits_{i=1}^{M_{\mathrm{d}}}(1+\gamma^{\mathrm{d}}_k)}-1$ as the corresponding SINR threshold. The
constraint can then be written as
\begin{equation}
\frac{\alpha_k\left|\mathbf{h}_{m,k,t}^{\mathsf H}\mathbf{v}_{m,k,t}\right|^2}{\xi^{\mathrm{th}}_{k,m,t}\sigma_{\mathrm{d}}^2}\geq\sum\limits_{j=1,j\ne k}^{K}\!
\!\!\frac{\alpha_k}{\sigma_{\mathrm{d}}^2}\left|\mathbf{h}_{m,k,t}^{\mathsf H}\mathbf{v}_{m,j,t}\right|^2+1.
\label{eq:fixed SINR constraint 1}
\end{equation}
Replacing the constraint \eqref{P2:rate} with \eqref{eq:fixed SINR constraint 1}, the corresponding Lagrangian is
\begin{equation}
\begin{split}\raisetag{2.5cm}
&\mathcal{L}
=\sum\limits_{m=1}^{M_{\mathrm{d}}}\sum\limits_{k=1}^{K}|\mathbf{v}_{m,k,t}|^2 +\\
&
\sum\limits_{k=1}^{K}\lambda_{k}\Big(1+\!\!\!\!\sum\limits_{j=1,j\ne k}^{K}\!
\!\!\frac{\alpha_k}{\sigma_{\mathrm{d}}^2}\left|\mathbf{h}_{m,k,t}^{\mathsf H}\mathbf{v}_{m,j,t}\right|^2\!\!-\frac{\alpha_k\left|\mathbf{h}_{m,k,t}^{\mathsf H}\mathbf{v}_{m,k,t}\right|^2}{\xi^{\mathrm{th}}_{k,m,t}\sigma_{\mathrm{d}}^2}\Big).
\end{split}
\label{eq:lagrangian P2}
\end{equation}

Applying the first-order stationarity condition, i.e., \(\partial\mathcal{L}/\partial\mathbf{v}_{m,k,t}=\mathbf{0}\) yields
\begin{equation}
\begin{aligned}
&\mathbf{v}_{m,k,t}+
\sum\limits_{j=1,j\ne k}^{K}
\frac{\lambda_{j}\alpha_j}{\sigma_{\mathrm{d}}^2}
\mathbf{h}_{m,j,t}\mathbf{h}_{m,j,t}^{\mathsf H}\mathbf{v}_{m,k,t}=\\
&\quad\quad\quad\quad\quad\quad\quad\quad\quad\quad\frac{\lambda_{k}\alpha_k}
{\xi_{m,k,t}\sigma_{\mathrm{d}}^2}
\mathbf{h}_{m,k,t}\mathbf{h}_{m,k,t}^{\mathsf H}\mathbf{v}_{m,k,t}.
\end{aligned}
\label{eq:kkt1}
\end{equation}
By adding $\frac{\lambda_{k}\alpha_k}{\sigma_{\mathrm{d}}^2}
\mathbf{h}_{m,k,t}\mathbf{h}_{m,k,t}^{\mathsf H}\mathbf{v}_{m,k,t}$ to the both sides of \eqref{eq:kkt1} and rearranging the resulting terms gives
\begin{equation}
\begin{aligned}
\mathbf{v}_{m,k,t}
=&\Big(\mathbf{I}_{N_t}+\sum_{j=1}^{K}
\frac{\lambda_{j}\alpha_j}{\sigma_{\mathrm{d}}^2}
\mathbf{h}_{m,j,t}\mathbf{h}_{m,j,t}^{\mathsf H}
\Big)^{-1}
\mathbf{h}_{m,k,t} \\
&\times
\underbrace{
\frac{\lambda_{m}\alpha_k}{\sigma_{\mathrm{d}}^2}
\Big(1+\frac{1}{\xi_{m,k,t}}\Big)
\mathbf{h}_{m,k,t}^{\mathsf H}\mathbf{v}_{m,k,t}
}_{\mathrm{scalar}} .
\end{aligned}
\label{eq:kkt3}
\end{equation}
Equation~\eqref{eq:kkt3} shows that the optimal beamforming vector is
collinear with the first term in~\eqref{eq:kkt3} because the remaining multiplicative term is a scalar. Therefore, each optimal beamforming vector can be decomposed into an allocated power and a
unit-norm beamforming direction as
$\mathbf{v}_{m,k,t}^{\star}\in\mathbb{C}^{N_t\times 1}$ can be expressed as
\begin{equation}
\!\!\!\!\!\!\mathbf{v}_{m,k,t}\!=\!\sqrt{p_{m,k,t}^{\mathrm d}}\!
\frac{\!\big(\mathbf{I}_{N_t}\!\!+\!\!\sum\limits_{i=1}^{K}\frac{\eta_{i,t}\alpha_i}{\sigma_{\mathrm{d}}^2}
\mathbf{h}_{m,i,t}\mathbf{h}_{m,i,t}^{\mathsf H}\big)^{\!-1}\mathbf{h}_{m,k,t}}
{\!\big|\big(\mathbf{I}_{N_t}\!\!+\!\!\sum\limits_{i=1}^{K}\frac{\eta_{i,t}\alpha_i}{\sigma_{\mathrm{d}}^2}
\mathbf{h}_{m,i,t}\mathbf{h}_{m,i,t}^{\mathsf H}
\big)^{\!-1}
\mathbf{h}_{m,k,t}
\big|
},
\label{eq:optimal beamforming1}
\end{equation}
where \(p_{m,k,t}^{\mathrm d}\ge0\) denotes the DL transmit power allocated to
\(\mathrm{AU}_k\) on subchannel \(m\), satisfying $\sum_{m=1}^{M_{\mathrm{d}}}\sum_{k=1}^{K}p_{m,k,t}^{\mathrm d}
\leq P_{\max}^{\mathrm d}$. This completes the proof.

\bibliographystyle{IEEEtran}
\bibliography{my_ref}

\end{document}